\newtheorem{theorem}{Theorem}
\newtheorem{definition}[theorem]{Definition}
\newtheorem{lemma}[theorem]{Lemma}
\newtheorem{proposition}[theorem]{Proposition}
\newenvironment{proof}[1][Proof]{\textbf{#1.} }{\ \rule{0.5em}{0.5em}}
\begin{document}

\author{Fabio Bellini \thanks{%
Dipartimento di Metodi Quantitativi, Universit\'{a} di Milano Bicocca, Italy.
E-mail: fabio.bellini@unimib.it}, Franco Pellerey \thanks{%
Dipartimento di Matematica, Politecnico di Torino, Italy. E-mail:
franco.pellerey@polito.it}, Carlo Sgarra\thanks{%
Dipartimento di Matematica, Politecnico di Milano, Italy. E-mail:
carlo.sgarra@polimi.it} and Salimeh Yasaei Sekeh \thanks{%
Department of Statistics, Ferdowsi University of Mashhad, Iran. E-mail:
sayasaei@yahoo.com}}
\title{Comparison Results for Garch Processes}
\maketitle

\begin{abstract}
We consider the problem of stochastic comparison of general Garch-like processes, for different parameters and different distributions of the innovations.
We identify several stochastic orders that are propagated from the innovations to the Garch process itself, and discuss their interpretations.
We focus on the convex order and show that in the case of symmetric innovations it is also propagated to the cumulated sums of the Garch process.
More generally, we discuss multivariate comparison results related to the multivariate convex and supermodular order. Finally we discuss ordering with respect to the parameters in the Garch (1,1) case.
Keywords: Garch, Convex Order, Peakedness, Kurtosis, Supermodularity.

\end{abstract}

\section{Introduction}

An extensive literature is available on applications of stochastic orders to
finance and insurance markets. The implications of stochastic orders for
derivative pricing and risk management are relevant.\ The increasing
dependence of european option prices by the riskiness of the underlying it
is a well known property for basic models like that of Black-Merton-Scholes,
in which riskiness is expressed in terms of the logreturns distribution
variance: the uncertainty is quantified there through the dispersion around
the expected value and the distribution functions can be ordered according
to their "peakedness"; the larger is the dispersions, the higher the option
prices. This very elementary and intuitive observation for simple models
become more involved when turning attention to more complex models, where a
more rigorous approach is necessary in order to avoid wrong conclusions.

The Black-Merton-Scholes model is nowadays considered fairly inadequate to
describe the asset price dynamics; several empirical facts cannot be
explained on the basis of this model: some statistical features exhibited by
logreturns like fat tails, volatility clustering, aggregational Gaussianity
and the so-called leverage effect are completely outside of the prevision
properties of the Black-Merton-Scholes model. Moreover a very relevant
phenomenon exhibited by option prices, the "volatility smile" (and its term
structure) cannot be explained on this model basis. In order to provide a
more satisfactory description several different models have been introduced.
Some of these models introduce a stocastic dependence in volatility and/or
jumps in asset logreturns (and/or in volatility) dynamics both in continuous
and discrete time setting.

Among discrete time models introduced in order to remove some of the
Black-Merton-Scholes model drawbacks, the class of Autoregressive
Conditioned Heteroschedastic (ARCH) models introduced by Engle \cite{Engle}
and their general extension (Garch models) proposed by Bollerslev in \cite%
{Bollerslev} have risen considerable interest.

Several results related to stochastic orders are available for the
continuous time models class: in \cite{BergRusch}, where a systematic
investigation on semimartingale models is performed; the models considered
there include the Heston and the Barndorff-Nielsen and Shephard models. In
\cite{Moeller}\ T. M{\o }ller provides some results on stochastic orders in
a dynamic reinsurance market where the traded risk process is driven by a
compound Poisson process and the claim amount is unbounded. Stochastic order
\ properties have been used to obtain bounds for option prices in incomplete
markets; the literature focused on this subject is quite extensive and we
just mention the papers by El Karoui et al. \cite{El Karoui}, by Bellamy and
Jeanblanc, and by Gushchin and Mordecki \cite{Gushchin}.

The purpose of the present paper is to present a systematic investigation of
stochastic orders propagation in a Garch\ context.

Comparison with stochastic orders in incomplete market models can give rise
to different classes of problems: first can be considered the comparison of
models under the same probability measure but with different parametric
specification, second it can be examined the problem of comparing the same
model under different probability measures; as a matter of fact, when
markets are incomplete, there are several probability measures equivalent to
the historical one, under which the dynamics of prices can be given. In this
paper we shall focus on the first class of problems mentioned: we shall
provide a systematic comparison of logreturns and then of prices when the
model parameters change, but the dynamics is specified under the same
probability measure. In a Garch context the parameters entering into play
are three parameters assuming a numerical value and the innovations, which
are random IID variables for which the density function is assigned. We just
mentioned that stochastic order results have important implications on
option pricing and this holds true also in a Garch context: in particular,
convex order relations on logreturn sums imply increasing convex order
relations on the underlying price, hence on european call option prices and
this can be considered the main relevance of our results from an application
viewpoint.

We like to present a numerical illustration in order to motivate our
investigation. Fig 1 compare the densities of the logreturn sums
in a Garch (1,1) model with respect to variations in the parameters $\alpha
_{0},\alpha _{1},\beta _{1}$; in the continuous line all parameters assume
the value 0.2, while the other lines represent the same density, but with
parameters $\alpha_{0}=0.5$,$\alpha _{1}=0.5$ and $\beta _{1}=0.5$ respectively
(the parameter assuming the value 0.5 in indicated by a capital letter in each
curve caption). The sums include the first 50 terms of the logreturn sequence.

\bigskip
\begin{figure}
\centering \includegraphics[width=1.0\textwidth]{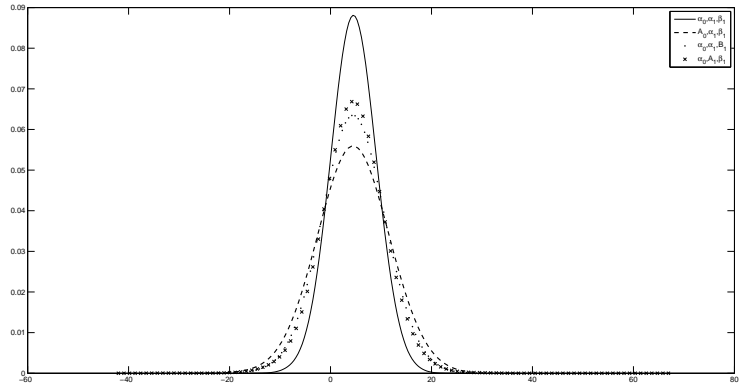}\\
  \caption{Comparison of Logreturn Sum Densities}\label{fig1}
\end{figure}

The innovations, assumed to be independent and identically distributed,
and the initial variance, are chosen as standard Gaussian random variables.
It is evident how the usual real numbers order relation between parameters implies
some ordering on logreturn sums; it then arises as a natural question to ask if
this simple remark can be made rigorous and if this conclusion can be cast
into a more general framework considering different kind of stochastic
orders and comparing stochastic order propagation from innovations to
logreturns and to logreturns sums.

In the following section we present briefly the Garch models and an
auxiliary lemma. In Section 3 we present the univariate stochastic comparisons
for logreturns in a Garch setting. In Section 4 we provide
results on some related orders, while in Section 5 the convex order
propagation of the logreturn sums is investigated together with its
implications on asset and european call option prices, while their
multivariate convex order propagation is the subject of Section 6 . We focus
our attention on the Garch(1,1) case in the last section.

The main results provided in this paper require the assumption of a symmetric
probability density for innovations.
The extension of the stochastic order propagation analysis presented here
to the case of non-symmetric innovation densities would be of great interest:
both the convex order propagation result and the comparison with respect to
parameter variations will be the subject of future investigation in this more general setting.
Moreover the identification of some convex multivariate order which naturally propagate
from innovations to logreturns is another target on which our research interest
will be focused. These will be the subject of our future work.

\section{General Garch models}

We consider Garch models of two different very general forms; the first
model (M1) is:
\begin{equation}
\left\{
\begin{array}{c}
X_{n}=\sigma _{n}\varepsilon _{n}\text{, }n=0,1,.. \\
\varepsilon _{n}\text{ }\bot \text{ }\sigma _{n}\text{, }E[\varepsilon
_{n}]=0 \\
\sigma _{n+1}=f^{I}(\left\vert \varepsilon _{n}\right\vert ,\sigma _{n})%
\end{array}%
\right.  \label{garch1}
\end{equation}%
with $f^{I}:\mathbb{R}_{+}^{2}\rightarrow \mathbb{R}_{+}$
increasing and componentwise convex (ccx for brevity).
\newline
The second model\ (M2) is%
\begin{equation}
\left\{
\begin{array}{c}
X_{n}=\sigma _{n}\varepsilon _{n}\text{, }n=0,1,.. \\
\varepsilon _{n}\text{ }\bot \text{ }\sigma _{n}\text{, }E[\varepsilon
_{n}]=0 \\
\sigma _{n+1}^{2}=f^{II}(\varepsilon _{n}^{2},\sigma _{n}^{2})%
\end{array}%
\right.  \label{garch2}
\end{equation}%
with $f^{II}:\mathbb{R}_{+}^{2}\rightarrow \mathbb{R}_{+}$ increasing and ccx.
In both cases the innovations $\varepsilon _{n}$
are independent and identically distributed (i.i.d.). When discussing the
propagation of variability orderings, the normalization requirement $%
E[\varepsilon _{n}^{2}]=1$ will be dropped. The difference between model M1
and model M2 is that in the first case the recursive dynamics is defined in
terms of the volatility $\sigma _{n}$, while in the second it is defined in
terms of the variance $\sigma _{n}^{2}$. \newline
The usual Garch (1,1) model is a particular case of both M1 and M2, and is
defined as follows:%
\begin{equation}
\left\{
\begin{array}{c}
X_{n}=\sigma _{n}\varepsilon _{n}\text{, }n=0,1,.. \\
\sigma _{n+1}^{2}=\alpha _{0}+\alpha _{1}X_{n}^{2}+\beta _{1}\sigma _{n}^{2}%
\end{array}%
\right.  \label{garch3}
\end{equation}%
with $\alpha _{0},\alpha _{1},\beta _{1}>0$ and $\alpha _{1}+\beta _{1}<1$,
in order to guarantee covariance stationarity. Both models start with a
possibly random $\sigma _{0}>0$, by drawing a random $\varepsilon _{0}$.%
\newline
The recursive equations for $\sigma _{n+1}$ and $\sigma _{n+1}^{2}$ are
examples of "stochastic recurrences" in the sense of Chapter 4 of \cite{ms}.
For the explicit expression of their solutions, we introduce the following
notations:%
\begin{align}
\sigma _{n+1} &:=g_{n+1}^{I}(\sigma _{0},\left\vert \varepsilon
_{0}\right\vert ,...,\left\vert \varepsilon _{n}\right\vert )
\label{gienne2} \\
\sigma _{n+1}^{2} &:=g_{n+1}^{II}(\sigma _{0}^{2},\varepsilon
_{0}^{2},...,\varepsilon _{n}^{2})  \notag
\end{align}%
As in \cite{ms}, we have the following:

\begin{lemma}
\label{gienneL}Let $g_{n+1}^{I}$, $g_{n+1}^{II}:%
\mathbb{R}
_{+}^{n+2}\rightarrow
\mathbb{R}
_{+}$ be defined as in (\ref{gienne2}). Then $g_{n+1}^{I}$ and $g_{n+1}^{II}$ are
increasing and componentwise convex.
\end{lemma}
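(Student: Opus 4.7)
The plan is to prove both statements by induction on $n$, treating $g_{n+1}^{I}$ and $g_{n+1}^{II}$ in parallel since the recursive structure is identical.

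For the base case $n=0$, observe that $g_{1}^{I}(\sigma_{0},|\varepsilon_{0}|)=f^{I}(|\varepsilon_{0}|,\sigma_{0})$ and $g_{1}^{II}(\sigma_{0}^{2},\varepsilon_{0}^{2})=f^{II}(\varepsilon_{0}^{2},\sigma_{0}^{2})$, so the properties hold by assumption on $f^{I}$ and $f^{II}$.

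For the inductive step, suppose $g_{n}^{I}$ is increasing and componentwise convex. Since
\[
g_{n+1}^{I}(\sigma_{0},|\varepsilon_{0}|,\dots,|\varepsilon_{n}|)=f^{I}\bigl(|\varepsilon_{n}|,\,g_{n}^{I}(\sigma_{0},|\varepsilon_{0}|,\dots,|\varepsilon_{n-1}|)\bigr),
\]
monotonicity in each argument follows immediately because $f^{I}$ is increasing in both of its arguments and $g_{n}^{I}$ is increasing in each of its arguments. For componentwise convexity I would split into two cases. In the variable $|\varepsilon_{n}|$ the inner function $g_{n}^{I}$ does not depend on this variable, so $g_{n+1}^{I}$ reduces to $f^{I}(\cdot,c)$ for a constant $c$, which is convex by the ccx hypothesis on $f^{I}$. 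In any other variable $x_{i}$, fix all remaining arguments and note that $g_{n}^{I}$ is a convex function of $x_{i}$ by the inductive hypothesis, while $f^{I}(|\varepsilon_{n}|,\cdot)$ is an increasing convex function on $\mathbb{R}_{+}$; the univariate composition rule ``increasing convex $\circ$ convex is convex'' then yields componentwise convexity in $x_{i}$. The identical argument with $f^{II}$, squared innovations, and $\sigma_{n}^{2}$ handles $g_{n+1}^{II}$.

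The only step that requires care is the composition step for ccx: it is crucial that $f^{I}$ (and $f^{II}$) be not only convex in its second argument but also \emph{increasing} in that argument, because the composition lemma fails without monotonicity. Since both hypotheses are built into the definitions (\ref{garch1}) and (\ref{garch2}), this causes no difficulty. No obstacle beyond bookkeeping is anticipated; the argument is a direct transcription of the standard induction for stochastic recurrences in Chapter 4 of \cite{ms}.
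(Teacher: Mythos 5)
Your proof is correct and follows essentially the same route as the paper: both unfold the recursion into the nested composition $g_{n+1}^{I}(\sigma _{0},\left\vert \varepsilon _{0}\right\vert ,\dots ,\left\vert \varepsilon _{n}\right\vert )=f^{I}(\left\vert \varepsilon _{n}\right\vert ,\dots ,f^{I}(\left\vert \varepsilon _{0}\right\vert ,\sigma _{0}))$ and rely on the closure of increasing, componentwise convex functions under such composition. The only difference is that where the paper simply cites Meester and Shanthikumar \cite{Meester} for this closure property, you prove it inline by induction, using the univariate rule that an increasing convex function composed with a convex function is convex (correctly noting that monotonicity in the second argument of $f^{I}$, $f^{II}$ is essential here) --- a valid, self-contained substitute for the citation.
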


\begin{proof}
We have clearly $g_{n+1}^{I}(\sigma _{0},\left\vert \varepsilon
_{0}\right\vert ,...,\left\vert \varepsilon _{n}\right\vert
)=f^{I}(\left\vert \varepsilon _{n}\right\vert ,...,f^{I}(\left\vert
\varepsilon _{0}\right\vert ,\sigma _{0}))$ and $g_{n+1}^{II}(\sigma
_{0}^{2},\varepsilon _{0}^{2},...,\varepsilon _{n}^{2})=f^{II}(\varepsilon
_{n}^{2},...,f^{II}(\varepsilon _{0}^{2},\sigma _{0}^{2}))$; since $f^{I}$
and $f^{II}$ are increasing and ccx, then also their compositions in (\ref%
{gienne2}) are increasing and ccx (see Meester and Shanthikumar \cite{Meester}
for this last assertion and further properties of increasing and ccx functions).
\end{proof}

\section{Univariate comparisons of $X_{n}$}

The aim of this section is to establish comparison results for $X_{n}$ when
the distributions of the innovations are changed from $\varepsilon _{k}$ to $%
\widetilde{\varepsilon }_{k}$. In order to establish these results, the
assumption that the innovations are identically distributed is not necessary
(while the independency assumption is essential). In the following theorems
only the distribution of a single innovation $\varepsilon _{k}$ will be
changed, and the impact of this change on $X_{n}$ will be investigated.

We recall the definitions of the basic stochastic orderings:
\begin{definition}
The random variable $X$ is said to be smaller than $Y$ in the usual stochastic order [convex order, increasing convex order],
denoted by $X\leq_{st}Y$ [$X\leq_{cx}Y, \ X\leq_{icx}Y$], if $E[\phi(X)] \leq E[\phi(Y)]$ for all increasing
[convex, increasing convex] functions $\phi:{\mathbb R}\to {\mathbb R}$ for which the expectations exist.
\end{definition}

We will see that in the general context of models M1 and M2 the orderings
that are naturally propagated from the innovations $\varepsilon _{k}$ to $%
X_{n}$ are the $\leq _{st}$ and the $\leq _{icx}$ ordering \textit{between
absolute values or squared variables}. This clearly completely modifies
their interpretation; in particular, in the next section we will see that
the $\leq _{st}$ ordering between absolute values or squares can be
interpreted as a \textit{variability ordering}, while the $\leq _{icx}$
ordering between absolute values or squares can be interpreted as a \textit{%
kurtosis} ordering.

In order to establish these results, we proceed in two steps: first we
consider the volatilities $\sigma _{n}$ and then the variables $X_{n}$. The
first step is an immediate consequence of Lemma \ref{gienneL}:

\begin{theorem}
\label{compsigma}\textbf{Comparisons of }$\sigma _{n}$\textbf{\ and }$\sigma
_{n}^{2}$\newline
a) Let $\sigma _{n+1}$ be as in (\ref{garch1}) and $\left\vert \varepsilon
_{k}\right\vert \preceq _{st}\left\vert \widetilde{\varepsilon }%
_{k}\right\vert $; it follows that $\sigma _{n+1}\preceq _{st}\widetilde{%
\sigma }_{n+1}$.\newline
b) Let $\sigma _{n+1}$ be as in (\ref{garch1}) and $\left\vert \varepsilon
_{k}\right\vert \preceq _{icx}\left\vert \widetilde{\varepsilon }%
_{k}\right\vert $; it follows that $\sigma _{n+1}\preceq _{icx}\widetilde{%
\sigma }_{n+1}$.\newline
c) Let $\sigma _{n+1}^{2}$ be as in (\ref{garch2}) and $\varepsilon
_{k}^{2}\preceq _{st}\widetilde{\varepsilon }_{k}^{2}$; it follows that $%
\sigma _{n+1}^{2}\preceq _{st}\widetilde{\sigma }_{n+1}^{2}$.\newline
d) Let $\sigma _{n+1}^{2}$ be as in (\ref{garch2}) and $\varepsilon
_{k}^{2}\preceq _{icx}\widetilde{\varepsilon }_{k}^{2}$; it follows that $%
\sigma _{n+1}^{2}\preceq _{icx}\widetilde{\sigma }_{n+1}^{2}$.
\end{theorem}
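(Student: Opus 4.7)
The plan is to derive all four statements as essentially immediate corollaries of Lemma \ref{gienneL}, which says that $g_{n+1}^{I}$ and $g_{n+1}^{II}$ are increasing and componentwise convex, combined with the standard closure properties of $\leq_{st}$ and $\leq_{icx}$ under such functions of independent variables. Because only a single innovation is changed, the argument reduces to a univariate closure statement obtained by conditioning on all the other coordinates.

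First I would treat parts (a) and (c), which are the $\leq_{st}$ statements. The recursive solution formulas (\ref{gienne2}) exhibit $\sigma_{n+1}$ and $\sigma_{n+1}^{2}$ as functions of $\sigma_{0}$ and the variables $|\varepsilon_{j}|$ (resp.\ $\varepsilon_{j}^{2}$), which are mutually independent. Since $g_{n+1}^{I}$ and $g_{n+1}^{II}$ are increasing by Lemma \ref{gienneL}, and the usual stochastic order is closed under application of increasing functions coordinatewise to independent vectors whose marginals are $\leq_{st}$-ordered, replacing the $k$-th marginal $|\varepsilon_{k}|$ by $|\widetilde{\varepsilon}_{k}|$ (and leaving all other marginals unchanged) preserves $\leq_{st}$, giving $\sigma_{n+1}\preceq_{st}\widetilde{\sigma}_{n+1}$ in (a) and $\sigma_{n+1}^{2}\preceq_{st}\widetilde{\sigma}_{n+1}^{2}$ in (c).

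For parts (b) and (d), the argument is a conditioning one. Fix an arbitrary increasing convex $\psi:\mathbb{R}_{+}\to\mathbb{R}$. Write $\sigma_{n+1}=g_{n+1}^{I}(\sigma_{0},|\varepsilon_{0}|,\ldots,|\varepsilon_{n}|)$ and condition on the random vector $W=(\sigma_{0},|\varepsilon_{j}|:j\neq k)$. Because $g_{n+1}^{I}$ is componentwise convex and increasing, for each realization $w$ of $W$ the section $h_{w}(x):=g_{n+1}^{I}(\ldots,x,\ldots)$, obtained by placing $x$ in the $k$-th slot, is a (univariate) increasing convex function of $x\in\mathbb{R}_{+}$; hence $\psi\circ h_{w}$ is increasing convex as well. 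By independence of $|\varepsilon_{k}|$ from $W$ and the assumption $|\varepsilon_{k}|\preceq_{icx}|\widetilde{\varepsilon}_{k}|$,
\begin{equation*}
E[\psi(h_{w}(|\varepsilon_{k}|))]\le E[\psi(h_{w}(|\widetilde{\varepsilon}_{k}|))]
\end{equation*}
for every $w$. Taking expectation over $W$ yields $E[\psi(\sigma_{n+1})]\le E[\psi(\widetilde{\sigma}_{n+1})]$, which is exactly (b); the same argument with $g_{n+1}^{II}$ in place of $g_{n+1}^{I}$ and $\varepsilon_{k}^{2},\widetilde{\varepsilon}_{k}^{2}$ in the $k$-th slot yields (d).

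The only real subtlety is ensuring that one is allowed to invoke the coordinatewise closure of these orders. For $\leq_{st}$ this is standard and follows from a coupling/quantile-transform argument; for $\leq_{icx}$ the reduction to the one-dimensional closure via conditioning is clean provided the $k$-th innovation is independent of the other coordinates, which is exactly the independence hypothesis built into models M1 and M2. No symmetry of the innovations is needed here because the $\leq_{st}$ and $\leq_{icx}$ assumptions are formulated on $|\varepsilon_{k}|$ or $\varepsilon_{k}^{2}$, not on $\varepsilon_{k}$ itself.
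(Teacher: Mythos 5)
Your proposal is correct and follows essentially the same route as the paper: both reduce the theorem to Lemma \ref{gienneL} together with the closure of $\preceq_{st}$ and $\preceq_{icx}$ under increasing (resp.\ increasing and componentwise convex) functions of independent coordinates, which the paper obtains by citing Theorems 1.A.3 and 4.A.15 of Shaked and Shanthikumar. The only difference is that you re-derive the needed one-coordinate special cases of those closure theorems (coupling for $\preceq_{st}$, conditioning on the unchanged coordinates for $\preceq_{icx}$) rather than citing them, which is a valid self-contained substitute.
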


\begin{proof}
Since from Lemma \ref{gienneL} in model M1 $\sigma _{n+1}=g_{n+1}^{I}(\sigma
_{0},\left\vert \varepsilon _{0}\right\vert ,...,\left\vert \varepsilon
_{n}\right\vert )$ with $g_{n+1}^{I}$ increasing and ccx, item a) and b)
follows respectively fromTheorem 1.A.3 in \cite{ss} and Theorem 4.A.15 in
\cite{ss}. Similarly, since from Lemma \ref{gienneL} in model M2 $\sigma
_{n+1}^{2}=g_{n+1}^{II}(\sigma _{0}^{2},\varepsilon _{0}^{2},...,\varepsilon
_{n}^{2})$ with $g_{n+1}^{II}$ increasing and ccx, from the same theorems we
get c) and d). \medskip
\end{proof}

The comparison results for $\sigma _{n}$ and $\sigma _{n}^{2}$ lead to the
following comparisons of the variables $X_{n}$:

\begin{theorem}
\label{compX}\textbf{Comparisons of }$X_{n}$\newline
a) Let $X_{n}$ be as in (\ref{garch1}) and $\left\vert \varepsilon
_{k}\right\vert \preceq _{st}\left\vert \widetilde{\varepsilon }%
_{k}\right\vert $; it follows that $\left\vert X_{n}\right\vert \preceq
_{st}\left\vert \widetilde{X}_{n}\right\vert $.\newline
b) Let $X_{n}$ be as in (\ref{garch1}) and $\left\vert \varepsilon
_{k}\right\vert \preceq _{icx}\left\vert \widetilde{\varepsilon }%
_{k}\right\vert $; it follows that $\left\vert X_{n}\right\vert \preceq
_{icx}\left\vert \widetilde{X}_{n}\right\vert $.\newline
c) Let $X_{n}$ be as in (\ref{garch2}) and $\varepsilon _{k}^{2}\preceq _{st}%
\widetilde{\varepsilon }_{k}^{2}$; it follows that $X_{n}^{2}\preceq _{st}%
\widetilde{X}_{n}^{2}$.\newline
d) Let $X_{n}$ be as in (\ref{garch2}) and $\varepsilon _{k}^{2}\preceq _{icx}%
\widetilde{\varepsilon }_{k}^{2}$; it follows that $X_{n}^{2}\preceq _{icx}%
\widetilde{X}_{n}^{2}$.
\end{theorem}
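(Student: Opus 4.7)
The plan is to reduce all four statements to a single application of the closure theorems already invoked in the proof of Theorem \ref{compsigma}, namely Theorem 1.A.3 and Theorem 4.A.15 of \cite{ss}. The key observation is that in model M1
\[
|X_n|=\sigma_n\,|\varepsilon_n|=g_n^I(\sigma_0,|\varepsilon_0|,\ldots,|\varepsilon_{n-1}|)\cdot|\varepsilon_n|,
\]
so $|X_n|=h_n^I(\sigma_0,|\varepsilon_0|,\ldots,|\varepsilon_n|)$ for a single deterministic function $h_n^I:\mathbb{R}_+^{n+2}\to\mathbb{R}_+$; analogously in model M2 one has $X_n^2=h_n^{II}(\sigma_0^2,\varepsilon_0^2,\ldots,\varepsilon_n^2)$.

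The main step is to check that $h_n^I$ and $h_n^{II}$ are increasing and componentwise convex on $\mathbb{R}_+^{n+2}$. In the last coordinate (the $|\varepsilon_n|$ or $\varepsilon_n^2$ slot) the function is linear with nonnegative coefficient $g_n^I\geq 0$ (respectively $g_n^{II}\geq 0$), hence increasing and convex. In any other coordinate, holding the remaining variables fixed, $h_n^I$ equals the nonnegative constant $|\varepsilon_n|$ multiplied by a slice of $g_n^I$, which by Lemma \ref{gienneL} is increasing and convex in that coordinate; scaling by a nonnegative constant preserves both properties. Componentwise convexity is \emph{not} preserved by products of general functions, but the argument works here because the extra factor is itself one of the arguments of $h_n^I$ and lives in $\mathbb{R}_+$; this small subtlety is really the only point requiring attention.

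With this in hand, the four conclusions are immediate. The vectors $(\sigma_0,|\varepsilon_0|,\ldots,|\varepsilon_n|)$ and $(\sigma_0,|\varepsilon_0|,\ldots,|\widetilde{\varepsilon}_k|,\ldots,|\varepsilon_n|)$ have mutually independent entries which agree in distribution in every coordinate but the $k$-th; invoking Theorem 1.A.3 of \cite{ss} for the increasing function $h_n^I$ under $|\varepsilon_k|\preceq_{st}|\widetilde{\varepsilon}_k|$ yields (a), while Theorem 4.A.15 of \cite{ss} applied to the increasing and componentwise convex $h_n^I$ under $|\varepsilon_k|\preceq_{icx}|\widetilde{\varepsilon}_k|$ yields (b). Parts (c) and (d) follow identically by replacing $h_n^I$ with $h_n^{II}$ and $|\varepsilon_j|$ with $\varepsilon_j^2$ throughout.
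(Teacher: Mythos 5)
Your proof is correct, and it takes a mildly but genuinely different route from the paper's. The paper argues in two steps: it first invokes Theorem \ref{compsigma} to obtain $\sigma_n \preceq_{st} \widetilde{\sigma}_n$ (resp. $\sigma_n \preceq_{icx} \widetilde{\sigma}_n$, and the squared analogues), and then applies Theorem 1.A.3(b), resp. Theorem 4.A.15, of \cite{ss} to the two-variable product $\psi(s,e)=s\,e$ of the \emph{independent} ordered factors $\sigma_n$ and $\left\vert \varepsilon_n \right\vert$. You instead absorb everything into the single map $h_n^{I}=g_n^{I}\cdot e_n$ on $\mathbb{R}_+^{n+2}$ and apply each closure theorem exactly once, which makes Theorem \ref{compsigma} unnecessary as an intermediate result. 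The price is the hand-verification that $h_n^{I}$ (and $h_n^{II}$) is increasing and ccx, and you handle the one real subtlety correctly: products do not preserve componentwise convexity in general, but here the extra factor is itself a nonnegative coordinate, so every one-dimensional slice is either linear with nonnegative slope (the $e_n$-slot, with coefficient $g_n^{I}\geq 0$) or a nonnegative constant multiple of an increasing convex slice of $g_n^{I}$, which is increasing and ccx by Lemma \ref{gienneL}. What each approach buys: the paper's factorization is shorter given that Theorem \ref{compsigma} is already on record, and it isolates the volatility comparison as a statement of independent interest; your one-shot version is self-contained modulo Lemma \ref{gienneL}, makes the functional representation $\left\vert X_n \right\vert = h_n^{I}(\sigma_0, \left\vert \varepsilon_0 \right\vert, \ldots, \left\vert \varepsilon_n \right\vert)$ explicit, and uniformly covers both $k<n$ and $k=n$ without case distinction, since coordinates that are equal in distribution are trivially ordered and the closure theorems of \cite{ss} then apply directly.
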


\begin{proof}
Since $\left\vert X_{n}\right\vert =\sigma _{n}\left\vert \varepsilon
_{n}\right\vert $ and $X_{n}^{2}=\sigma _{n}^{2}\varepsilon _{n}^{2}$, with $%
\sigma _{n}$ independent from $\varepsilon _{n}$, items a), and c) follow
from Theorem 1.A.3 (b) in \cite{ss}. Similarly item b) and d) follow from
Theorem 4.A.15 in \cite{ss}.\medskip
\end{proof}

A natural question that arises at this point is if also the convex order is
propagated, that is if $\varepsilon _{k}\preceq _{cx}\widetilde{\varepsilon }%
_{k}\Rightarrow X_{n}\preceq _{cx}\widetilde{X}_{n}$. We prove that this is
indeed the case for model M1. We start with a simple lemma:

\begin{lemma}
\label{stconv}Let $\sigma $ and $\tilde{\sigma}$ be nonnegative, with $%
\sigma \preceq _{st}\tilde{\sigma}$. Let $\varepsilon $ be independent from $%
\sigma $ and $\tilde{\sigma}$, with $E\left[ \varepsilon \right] =0$; then $%
\sigma \varepsilon \preceq _{cx}\tilde{\sigma}\varepsilon $.
\end{lemma}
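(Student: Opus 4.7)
The plan is to fix an arbitrary convex test function $\phi$ for which the relevant expectations exist, and to show that $E[\phi(\sigma\varepsilon)]\le E[\phi(\tilde\sigma\varepsilon)]$. The natural device is to condition on $(\sigma,\tilde\sigma)$, so I would introduce the function $g(s):=E[\phi(s\varepsilon)]$ defined for $s\ge 0$, where the expectation is taken over $\varepsilon$ only. By independence, $E[\phi(\sigma\varepsilon)]=E[g(\sigma)]$ and $E[\phi(\tilde\sigma\varepsilon)]=E[g(\tilde\sigma)]$. The crux of the proof will therefore be to prove that $g$ is nondecreasing on $[0,\infty)$, for then the stochastic-order hypothesis $\sigma\preceq_{st}\tilde\sigma$ (combined with $\sigma,\tilde\sigma\ge 0$) yields $E[g(\sigma)]\le E[g(\tilde\sigma)]$ by the defining property of $\preceq_{st}$.

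To establish the monotonicity of $g$ on $[0,\infty)$, I would rely on two elementary observations. First, for each fixed realisation of $\varepsilon$, the map $s\mapsto\phi(s\varepsilon)$ is convex, being the composition of a convex function with a linear one; integrating over $\varepsilon$ preserves convexity, so $g$ itself is convex on $\mathbb{R}$. Second, since $E[\varepsilon]=0$, Jensen's inequality applied at the random point $s\varepsilon$ gives $g(s)=E[\phi(s\varepsilon)]\ge\phi(E[s\varepsilon])=\phi(0)=g(0)$, so $s=0$ is a global minimiser of $g$. A convex function that attains its global minimum at $0$ is automatically nondecreasing on $[0,\infty)$ (and nonincreasing on $(-\infty,0]$), which is exactly what is needed.

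Putting the pieces together, for any convex $\phi$ the function $g$ is nondecreasing on $[0,\infty)$, hence $\sigma\preceq_{st}\tilde\sigma$ on the nonnegative half-line forces $E[g(\sigma)]\le E[g(\tilde\sigma)]$, i.e. $E[\phi(\sigma\varepsilon)]\le E[\phi(\tilde\sigma\varepsilon)]$; since $\phi$ was arbitrary, $\sigma\varepsilon\preceq_{cx}\tilde\sigma\varepsilon$.

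I do not foresee a serious obstacle: the argument is a standard conditioning plus Jensen. The only mild care is an integrability caveat (the convex $\phi$ must be such that $E[\phi(s\varepsilon)]$ is finite at the relevant scales, which is precisely what the definition of convex order requires and what Theorem~\ref{compsigma} and Theorem~\ref{compX} implicitly assume elsewhere in the paper), and the need to restrict $g$ to $[0,\infty)$ when invoking $\preceq_{st}$, which is justified by the nonnegativity of $\sigma$ and $\tilde\sigma$.
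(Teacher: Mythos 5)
Your proof is correct, and it takes a genuinely different route from the paper's. The paper argues by coupling: it invokes the standard almost-sure construction for $\preceq_{st}$ to realize copies with $\sigma\le\tilde\sigma$ a.s., observes that for each fixed pair of realizations (i.e.\ for constants $0\le s\le\tilde s$ and mean-zero $\varepsilon$) one has $s\varepsilon\preceq_{cx}\tilde s\varepsilon$, and then concludes via the closure of the convex order under mixtures (Theorem 3.A.12 of \cite{ss}). You instead work directly with the functional characterizations: you condition to form the kernel $g(s)=E[\phi(s\varepsilon)]$, show it is convex with a global minimum at $s=0$ (Jensen, using $E[\varepsilon]=0$), hence nondecreasing on $[0,\infty)$, and then apply the definition of $\preceq_{st}$ to the increasing function $g$. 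The two arguments pivot on the same underlying fact --- the monotonicity of $s\mapsto E[\phi(s\varepsilon)]$ on $[0,\infty)$ is exactly the conditional comparison $s\varepsilon\preceq_{cx}\tilde s\varepsilon$ that the paper asserts without proof --- but you handle the randomness of $\sigma$ and $\tilde\sigma$ with no coupling theorem and no mixture-closure theorem, so your proof is fully self-contained and in addition supplies the scaling step the paper leaves implicit. What the paper's version buys is brevity, by outsourcing to two standard results; what yours buys is transparency, at the cost of a few extra lines. Your two caveats are exactly the right ones: the integrability proviso is the usual one built into the definition of the orders, and the extension of $g$ to a function nondecreasing on all of $\mathbb{R}$ (constant on the negative half-line, say) is licensed precisely by the nonnegativity of $\sigma$ and $\tilde\sigma$, so neither creates a gap.
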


\begin{proof}
We can construct identically distributed copies of $\sigma $ and $\tilde{%
\sigma}$ on the same probability space, such that $\sigma \leq \tilde{\sigma}
$ a.s. Then for each realization of $\sigma $, we have that $\sigma
\varepsilon \preceq _{cx}\tilde{\sigma}\varepsilon $. From Theorem 3.A.12 in
\cite{ss} it follows that $\sigma \varepsilon \preceq _{cx}\tilde{\sigma}%
\varepsilon $.\medskip
\end{proof}

\begin{theorem}
\label{propconv}\textbf{Propagation of convex order}\newline
Let $X_{n}$ be as in (\ref{garch1}) and $\varepsilon _{k}\preceq _{cx}%
\widetilde{\varepsilon }_{k}$; it follows that $X_{n}\preceq _{cx}\widetilde{%
X}_{n}$
\end{theorem}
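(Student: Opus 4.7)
The strategy is a case split on the position of the perturbed innovation, exactly as in Theorem \ref{compX}. Write $\widetilde{X}_n$ for the process obtained by substituting $\widetilde{\varepsilon}_k$ for $\varepsilon_k$ and leaving all other inputs ($\sigma_0$ and the remaining innovations) untouched. The case $k>n$ is vacuous since $X_n$ is a function of $\sigma_0,\varepsilon_0,\dots,\varepsilon_n$ only. For $k=n$, the volatility $\sigma_n$ does not depend on $\varepsilon_n$, so $\widetilde{\sigma}_n=\sigma_n$, and $X_n=\sigma_n\varepsilon_n$, $\widetilde{X}_n=\sigma_n\widetilde{\varepsilon}_n$. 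Conditioning on $\sigma_n=s$ and noting that $x\mapsto\phi(sx)$ is convex whenever $\phi$ is, the hypothesis $\varepsilon_n\preceq_{cx}\widetilde{\varepsilon}_n$ gives $E[\phi(s\varepsilon_n)]\le E[\phi(s\widetilde{\varepsilon}_n)]$; integrating against the law of $\sigma_n$ yields $X_n\preceq_{cx}\widetilde{X}_n$ in this case (this is a direct application of Theorem 3.A.12 in \cite{ss}, paralleling Lemma \ref{stconv}).

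The substantive case is $k<n$, where $\sigma_n$ does depend on $\varepsilon_k$ (through $|\varepsilon_k|$), while the outer multiplier $\varepsilon_n$ is left untouched. The first step is to upgrade the hypothesis to $|\varepsilon_k|\preceq_{icx}|\widetilde{\varepsilon}_k|$: for any increasing convex $u$ on $\mathbb{R}_+$, the map $x\mapsto u(|x|)$ is convex on $\mathbb{R}$ (by the triangle inequality together with monotonicity and convexity of $u$), hence $\varepsilon_k\preceq_{cx}\widetilde{\varepsilon}_k$ implies $E[u(|\varepsilon_k|)]\le E[u(|\widetilde{\varepsilon}_k|)]$. Theorem \ref{compsigma}(b) then gives $\sigma_n\preceq_{icx}\widetilde{\sigma}_n$.

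Given a convex test $\phi$, set $\psi(s):=E[\phi(s\varepsilon_n)]$ for $s\ge 0$. Using independence of $\varepsilon_n$ from $\sigma_n$ (and from $\widetilde{\sigma}_n$) one has $E[\phi(X_n)]=E[\psi(\sigma_n)]$ and the same with tildes, so the claim reduces to showing that $\psi$ is increasing and convex on $[0,\infty)$. Convexity is immediate from linearity of expectation and convexity of $\phi$. Monotonicity is the crucial step: for $0\le s\le s'$, writing $s\varepsilon_n=(s/s')\,s'\varepsilon_n+(1-s/s')\cdot 0$ and applying Jensen's inequality to the convex $\phi$ gives
$$\psi(s)\le \tfrac{s}{s'}\,\psi(s')+\bigl(1-\tfrac{s}{s'}\bigr)\phi(0),$$
while a second use of Jensen with $E[\varepsilon_n]=0$ yields $\phi(0)\le\psi(s')$; combining the two produces $\psi(s)\le\psi(s')$. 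Once $\psi$ is known to be increasing convex, $\sigma_n\preceq_{icx}\widetilde{\sigma}_n$ delivers $E[\psi(\sigma_n)]\le E[\psi(\widetilde{\sigma}_n)]$, and the three cases together give $X_n\preceq_{cx}\widetilde{X}_n$.

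The main obstacle is exactly the monotonicity of $\psi$: this is where the mean-zero assumption $E[\varepsilon_n]=0$ from model M1 becomes essential, since without it one would only obtain convexity of $\psi$, which combined with the weaker $\preceq_{icx}$ ordering of the volatilities would not suffice. Everything else is routine propagation of $\preceq_{icx}$ through increasing convex functions via Theorem \ref{compsigma} and a standard conditioning argument in the spirit of Lemma \ref{stconv}.
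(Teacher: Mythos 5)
Your proof is correct, and its endgame takes a genuinely different route from the paper's. Both arguments share the opening: $\varepsilon _{k}\preceq _{cx}\widetilde{\varepsilon }_{k}$ is upgraded to $\left\vert \varepsilon _{k}\right\vert \preceq _{icx}\left\vert \widetilde{\varepsilon }_{k}\right\vert $ (composition of an increasing convex function with $\left\vert \cdot \right\vert $ is convex), and Theorem \ref{compsigma}(b) yields the $\preceq _{icx}$ comparison of the volatilities. From there the paper invokes Theorem 4.A.6 of \cite{ss} to manufacture an intermediate variable $\overline{\sigma }_{n+1}$ with $\sigma _{n+1}\preceq _{st}\overline{\sigma }_{n+1}\preceq _{cx}\widetilde{\sigma }_{n+1}$, treats the $\preceq _{st}$ leg with the coupling argument of Lemma \ref{stconv} and the $\preceq _{cx}$ leg by conditioning on the independent multiplier $\varepsilon _{n+1}$, and concludes by transitivity. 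You bypass the separation theorem entirely: you verify by hand that $\psi (s)=E[\phi (s\varepsilon _{n})]$ is increasing and convex on $[0,\infty )$ and then read $\sigma _{n}\preceq _{icx}\widetilde{\sigma }_{n}$ as a test-function statement. Your dilation inequality $\psi (s)\leq (s/s')\psi (s')+(1-s/s')\phi (0)$ together with $\phi (0)\leq \psi (s')$ (Jensen plus $E[\varepsilon _{n}]=0$) is in substance the same mean-zero mechanism that drives Lemma \ref{stconv} --- for coupled realizations $s\leq s'$ one needs precisely $s\varepsilon \preceq _{cx}s'\varepsilon $ --- but you deploy it once, directly, with no coupling and no auxiliary random variable. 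The paper's route is more modular, reusing Lemma \ref{stconv} and the standard decomposition of $\preceq _{icx}$ into $\preceq _{st}$ followed by $\preceq _{cx}$; yours is more elementary and self-contained, and it isolates exactly where the hypothesis $E[\varepsilon _{n}]=0$ of model M1 is indispensable. Two small remarks: your explicit cases $k=n$ and $k>n$ cover situations the paper's proof leaves implicit (it writes $X_{n+1}=\sigma _{n+1}\varepsilon _{n+1}$, tacitly taking $k\leq n$); and since $\preceq _{icx}$ is defined via test functions on all of ${\mathbb R}$ while your $\psi $ lives on $[0,\infty )$, one should note that $\psi $ extends to an increasing convex function on ${\mathbb R}$ (constantly equal to $\phi (0)$ on $(-\infty ,0]$), which is immediate from the monotonicity you established --- a cosmetic point, since $\sigma _{n}$ and $\widetilde{\sigma }_{n}$ are nonnegative.
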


\begin{proof}
First of all we remark that since $\varepsilon _{k}\preceq _{cx}\widetilde{%
\varepsilon }_{k}$, it follows that $\left\vert \varepsilon _{k}\right\vert
\preceq _{icx}\left\vert \widetilde{\varepsilon }_{k}\right\vert $. Indeed,
for each $\phi $ increasing and convex, the composition $\phi (\left\vert
...\right\vert )$ is convex; this implies that $E[\phi (\left\vert
\varepsilon _{k}\right\vert )]\leq E[\phi (\left\vert \widetilde{\varepsilon
}_{k}\right\vert )]$, that is $\left\vert \varepsilon _{k}\right\vert
\preceq _{icx}\left\vert \widetilde{\varepsilon }_{k}\right\vert $. From
Proposition \ref{compsigma} item b), it then follows that $\sigma
_{n+1}\preceq _{icx}\widetilde{\sigma }_{n+1}$. From Theorem 4.A.6 in \cite%
{ss} there exists a random variable $\overline{\sigma }_{n+1}$ with $\sigma
_{n+1}\preceq _{st}\overline{\sigma }_{n+1}\preceq _{cx}\widetilde{\sigma }%
_{n+1}$. By Lemma \ref{stconv}, we have that $\sigma _{n+1}\preceq _{st}%
\overline{\sigma }_{n+1}$ implies that $\sigma _{n+1}\varepsilon
_{n+1}\preceq _{cx}\overline{\sigma }_{n+1}\varepsilon _{n+1}$; on the other
hand $\overline{\sigma }_{n+1}\preceq _{cx}\widetilde{\sigma }_{n+1}$
implies that $\overline{\sigma }_{n+1}\varepsilon _{n+1}\preceq _{cx}%
\widetilde{\sigma }_{n+1}\varepsilon _{n+1}$. By transitivity we get $\sigma
_{n+1}\varepsilon _{n+1}\preceq _{cx}\widetilde{\sigma }_{n+1}\varepsilon
_{n+1}$. \medskip
\end{proof}

\section{The relevant orderings}

In the preceding section the orderings defined by $\left\vert X\right\vert
\preceq _{st}\left\vert Y\right\vert $, $X^{2}\preceq _{st}Y^{2}$, $%
\left\vert X\right\vert \preceq _{icx}\left\vert Y\right\vert $, $%
X^{2}\preceq _{icx}Y^{2}$ have arisen naturally.\newline
In order to better understand their meaning, in the following lemmas we
identify some necessary and sufficient conditions\textbf{\ }in the
continuous and symmetric case. We have the following:

\begin{lemma}
Let $X$ and $Y$ be symmetric with continuous distributions $F$ and $G$. The
following conditions are equivalent:\newline
a) $X^{2}\preceq _{st}Y^{2}$;\newline
b) $\left\vert X\right\vert \preceq _{st}\left\vert Y\right\vert $;\newline
c) $X\preceq _{peak}Y$, where $\preceq _{peak}$ is the peakedness ordering
introduced in \cite{birnbaum};\newline
d) $S^{-}(G-F)=1$ with sign sequence $+,-$, where $S^{-}(G-F)$ is the number
of intersections between $G$ and $F$ as defined in (1.A.18) of \cite{ss}.%
\newline
\end{lemma}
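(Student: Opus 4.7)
The plan is to establish the equivalences in the cycle (a) $\Leftrightarrow$ (b) $\Leftrightarrow$ (c) $\Leftrightarrow$ (d), relying repeatedly on the symmetry assumption to translate between the four formulations. The content of each equivalence is quite different: (a) $\Leftrightarrow$ (b) is purely set-theoretic, (b) $\Leftrightarrow$ (c) is a reading of definitions, and (b) $\Leftrightarrow$ (d) is where the symmetry of $F$ and $G$ really does the work.

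For (a) $\Leftrightarrow$ (b), I would simply observe that $|X|$ and $|Y|$ are nonnegative and the map $t\mapsto t^{2}$ is a strictly increasing bijection of $[0,\infty)$; hence $\{X^{2}>s\}=\{|X|>\sqrt{s}\}$ for every $s\geq0$, and the corresponding tail inequalities are equivalent. For (b) $\Leftrightarrow$ (c), I would unpack Birnbaum's definition: $X\preceq_{\mathrm{peak}}Y$ means $P(|X-c_{X}|>t)\leq P(|Y-c_{Y}|>t)$ for all $t\geq 0$, where $c_{X},c_{Y}$ are the centers of symmetry. Under the standing symmetry assumption both centers are $0$, so the condition collapses to $P(|X|>t)\leq P(|Y|>t)$, which is exactly $|X|\preceq_{st}|Y|$.

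The substantive equivalence is (b) $\Leftrightarrow$ (d). Symmetry gives $F(-t)=1-F(t)$ and $G(-t)=1-G(t)$, from which two facts drop out: first, $P(|X|>t)=2(1-F(t))$ for $t\geq 0$, so (b) rewrites as $G(t)\leq F(t)$ for all $t\geq 0$; and second, $(G-F)(-t)=-(G-F)(t)$, i.e.\ $G-F$ is odd about the origin. The direction (b) $\Rightarrow$ (d) is then immediate: (b) says $G-F\leq 0$ on $[0,\infty)$, and oddness forces $G-F\geq 0$ on $(-\infty,0]$, yielding a single sign change at $0$ with sign sequence $+,-$. For (d) $\Rightarrow$ (b), sign sequence $+,-$ with $S^{-}(G-F)=1$ produces a threshold $t_{0}$ such that $G-F\geq 0$ on $(-\infty,t_{0}]$ and $G-F\leq 0$ on $[t_{0},\infty)$; one then shows $t_{0}$ can be taken to be $0$ (up to an interval on which $G=F$), which recovers (b).

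The main obstacle is precisely this last argument: ruling out a crossing at some $t_{0}\neq 0$. My plan is to argue by contradiction with the oddness of $G-F$. If, say, $t_{0}>0$ and $G-F$ were strictly positive somewhere in $(0,t_{0})$, then by oddness $G-F$ would be strictly negative somewhere in $(-t_{0},0)$; since $(-t_{0},0)\subset(-\infty,t_{0})$, this contradicts the $+$ sign on the left of $t_{0}$ and would force a second sign change, violating $S^{-}(G-F)=1$. Hence $G-F\equiv 0$ on $(0,t_{0})$, and combined with $G-F\leq 0$ on $[t_{0},\infty)$ we obtain $G\leq F$ on $[0,\infty)$; the case $t_{0}<0$ is symmetric. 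Continuity of $F$ and $G$ is used here to ensure that these sign-change statements are meaningful and that isolated zeros cannot create spurious crossings.
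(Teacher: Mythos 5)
Your proposal is correct, and it proves the same cycle of equivalences as the paper, but by a genuinely more self-contained route. The paper's proof is essentially a chain of citations: (a) $\Leftrightarrow$ (b) is attributed to Theorem 1.A.3 of Shaked--Shanthikumar, (b) $\Leftrightarrow$ (c) to the definition of peakedness, and (c) $\Leftrightarrow$ (d) to Theorem 3.D.1 of the same reference. You replace the first citation by the elementary observation that $\{X^{2}>s\}=\{|X|>\sqrt{s}\}$, and, more substantially, you replace the appeal to Theorem 3.D.1 by a direct proof that under symmetry the single-crossing condition (d) is equivalent to $G\leq F$ on $[0,\infty)$: the key points---that $G-F$ is odd (which uses continuity, via $F(-t)=1-F(t)$), that (b) rewrites as $G\leq F$ on $[0,\infty)$ because $P(|X|>t)=2(1-F(t))$, and the contradiction argument forcing the crossing point $t_{0}$ to be at the origin---are all sound. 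What your approach buys is a proof readable without the cited theorem and which makes visible exactly where symmetry and continuity enter; what the paper's approach buys is brevity and the ability to lean on the textbook's careful handling of the sign-change machinery. One shared caveat, not a flaw specific to your argument: in the degenerate case $F\equiv G$, conditions (a)--(c) hold trivially while $S^{-}(G-F)=0$, so your step ``(b) yields a single sign change at $0$'' tacitly assumes $F\neq G$ somewhere; the lemma as stated (and the cited Theorem 3.D.1) carries the same convention, so you should simply note that (d) is to be read modulo the trivial case, or that (b) gives $S^{-}(G-F)\leq 1$ with sign sequence $+,-$ whenever a sign change occurs.
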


\begin{proof}
The equivalence of a) and b) is an immediate consequence of Theorem 1.A.3 in
\cite{ss}. The equivalence of b) and c) is the definition of the peakedness
ordering, while the equivalence between c) and d) follows from Theorem 3.D.1
in \cite{ss}.\medskip
\end{proof}

\begin{lemma}
Let $X$ and $Y$ be symmetric with continuous distributions $F$ and $G$. The
following conditions are equivalent:\newline
a) $X^{2}\preceq _{icx}Y^{2}$ \newline
b) $\int_{x}^{+\infty }\overline{F}(u)\cdot u\cdot du\leq \int_{x}^{+\infty }%
\overline{G}(u)\cdot u\cdot du$ for each $x\geq 0$, where $\overline{F}%
(u)=1-F(u)$ and $\overline{G}(u)=1-G(u)$\newline
c) $E[(X^{2}-k)^{+}]\leq E[(Y^{2}-k)^{+}]$ for each $k\geq 0$
\end{lemma}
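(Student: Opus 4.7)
The plan is to prove the two equivalences (a)$\iff$(c) and (b)$\iff$(c), which together give the cyclic equivalence. The argument is essentially a standard stop-loss characterization of $\preceq_{icx}$ together with one substitution that uses the symmetry hypothesis; nothing substantive is hard here.

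For the equivalence of (a) and (c), I would invoke the standard characterization of the increasing convex order in terms of stop-loss transforms (see, e.g., Theorem 4.A.3 in \cite{ss}): $X^2 \preceq_{icx} Y^2$ if and only if $E[(X^2 - k)^+] \leq E[(Y^2 - k)^+]$ for every $k \in \mathbb{R}$. Because $X^2, Y^2 \geq 0$, for $k < 0$ we have $(X^2 - k)^+ = X^2 - k$, so the inequality collapses to $E[X^2] \leq E[Y^2]$, which is already the $k = 0$ case. Hence the condition over $k \geq 0$ is equivalent to the condition over all $k \in \mathbb{R}$, giving (a)$\iff$(c).

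For the equivalence of (b) and (c), I would start from the tail integration identity
\[
E[(X^2 - k)^+] = \int_0^{+\infty} P(X^2 > k + t)\, dt
\]
and apply the substitution $u = \sqrt{k + t}$, $dt = 2u\, du$, to get
\[
E[(X^2 - k)^+] = \int_{\sqrt{k}}^{+\infty} P(|X| > u)\, 2u\, du.
\]
The symmetry and continuity of $X$ yield $P(|X| > u) = 2\overline{F}(u)$ for $u \geq 0$, hence
\[
E[(X^2 - k)^+] = 4 \int_{\sqrt{k}}^{+\infty} \overline{F}(u)\, u\, du,
\]
and analogously $E[(Y^2 - k)^+] = 4 \int_{\sqrt{k}}^{+\infty} \overline{G}(u)\, u\, du$. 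Thus (c) (valid for all $k \geq 0$) becomes, after the change of variable $x = \sqrt{k}$, exactly the inequality in (b) (valid for all $x \geq 0$), and conversely. This finishes the plan.

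The only place that requires any care is verifying that the substitution argument correctly produces $2\overline{F}(u)$ for the tail of $|X|$, which is where the symmetry hypothesis is used; everything else is routine. There is no serious obstacle.
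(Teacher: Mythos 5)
Your proposal is correct and follows essentially the same route as the paper: both rest on the standard stop-loss/integrated-survival characterization of $\preceq_{icx}$ (the paper cites Theorem 4.A.2 of \cite{ss} and Theorem 1.5.7 of \cite{ms}) combined with the change of variable $u=\sqrt{t}$ and the symmetry-plus-continuity identity $P(|X|>u)=2\overline{F}(u)$. The differences are cosmetic: the paper links a) to b) and a) to c) by citation, whereas you link a) to c) and b) to c) with the substitution written out explicitly, and you additionally dispose of the $k<0$ case that the cited theorems absorb silently.
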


\begin{proof}
Under our hypothesis $F_{X^{2}}(t)=2F(\sqrt{t})-1$ and $\overline{F}%
_{X^{2}}(t)=2-2F(\sqrt{t})$, for $t\geq 0$. The equivalence of a) and b)
follows then from Theorem 4.A.2 in \cite{ss} with a simple change of
variable. The equivalence between a) and c) is also a consequence of Theorem
1.5.7 in \cite{ms}\medskip
\end{proof}

The first lemma shows that for symmetric variables the orderings $\left\vert
X\right\vert \preceq _{st}\left\vert Y\right\vert $ and $X^{2}\preceq
_{st}Y^{2}$ are variability comparisons equivalent to the peakedness ordering,
that in this case boils down to item d), that is the validity of a single
cut condition between the distribution functions. In the typical econometric
applications these orderings are however not very relevant since the
innovations satisfy $E[\varepsilon _{k}^{2}]=1$, and hence $E[\varepsilon
_{k}^{2}]\preceq _{st}E[\widetilde{\varepsilon }_{k}^{2}]$ would imply $%
\varepsilon _{k}^{2}=_{st}\widetilde{\varepsilon }_{k}^{2}$.

In the normalized case the ordering $X^{2}\preceq _{icx}Y^{2}$ becomes
equivalent to $X^{2}\preceq _{cx}Y^{2}$; we prove a sufficient and a
necessary condition for it.

\begin{lemma}
Let $X$ and $Y$ be symmetric with continuous distributions $F$ and $G$ and
with $E[X^{2}]=E[Y^{2}]=1$. \newline
a) If the densities of $X$ and $Y$ cross $4$ times, with the density of $X$
being lower in the tails and in the center, and higher in the intermediate
region, then $X^{2}\preceq _{icx}Y^{2}$.\newline
b) If $X^{2}\preceq _{icx}Y^{2}$ and $X$ and $Y$ have finite fourth moments,
then $\beta _{2}(X)<\beta _{2}(Y)$, where $\beta _{2}$ is Pearson's kurtosis
coefficient.
\end{lemma}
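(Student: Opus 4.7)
The plan is to reduce part (a) to the classical single-crossing criterion for the convex order applied to $X^{2}$ and $Y^{2}$, and to obtain part (b) as an immediate consequence of the definition of $\preceq_{icx}$.

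For part (a), I would first use symmetry to push the problem onto $(0,\infty)$: the four prescribed crossings of $f_{X}$ and $f_{Y}$ on $\mathbb{R}$ must be placed symmetrically about the origin, so on $(0,\infty)$ there are exactly two crossings, with $f_{X}-f_{Y}$ negative near $0$, positive in the intermediate region, and negative in the tail. Since $X$ and $Y$ are continuous and symmetric, the densities of $X^{2}$ and $Y^{2}$ on $(0,\infty)$ are $f_{X^{2}}(t)=f_{X}(\sqrt{t})/\sqrt{t}$ and $f_{Y^{2}}(t)=f_{Y}(\sqrt{t})/\sqrt{t}$, so the change of variable $s=\sqrt{t}$ is strictly monotone and the positive factor $1/\sqrt{t}$ preserves signs: $f_{Y^{2}}-f_{X^{2}}$ has sign sequence $+,-,+$ on $(0,\infty)$.

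The heart of the argument is the passage from this density pattern to a single crossing of the distribution functions. Set $\Delta(t)=F_{Y^{2}}(t)-F_{X^{2}}(t)$. Then $\Delta(0)=0=\Delta(\infty)$, and the normalization $E[X^{2}]=E[Y^{2}]=1$ combined with $E[Z^{2}]=\int_{0}^{\infty}\overline{F}_{Z^{2}}(t)\,dt$ forces $\int_{0}^{\infty}\Delta(t)\,dt=0$. Because $\Delta'=f_{Y^{2}}-f_{X^{2}}$ has sign pattern $+,-,+$, the function $\Delta$ starts at $0$, rises to a positive peak, decreases through a unique zero, reaches a negative trough, and finally rises back to $0$; the equal-means condition is what guarantees that the middle descent actually crosses zero rather than remaining nonnegative. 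Hence $S^{-}(\Delta)=1$ with sign sequence $+,-$, which is exactly the hypothesis of the Karlin--Novikoff single-crossing criterion for the convex order (e.g.\ Theorem 3.A.44 in Shaked--Shanthikumar). Under the equal means $E[X^{2}]=E[Y^{2}]$ this yields $X^{2}\preceq_{cx}Y^{2}$, and \emph{a fortiori} $X^{2}\preceq_{icx}Y^{2}$.

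Part (b) is immediate: $\phi(t)=t^{2}$ is increasing and convex on the support $[0,\infty)$ of $X^{2}$ and $Y^{2}$, so $X^{2}\preceq_{icx}Y^{2}$ gives $E[X^{4}]=E[\phi(X^{2})]\leq E[\phi(Y^{2})]=E[Y^{4}]$. Symmetry yields $E[X]=E[Y]=0$, and combined with $E[X^{2}]=E[Y^{2}]=1$ this implies $\beta_{2}(X)=E[X^{4}]\leq E[Y^{4}]=\beta_{2}(Y)$, with strict inequality unless $X^{2}$ and $Y^{2}$ are equal in distribution. The main obstacle in the whole argument is the third step of (a): correctly deducing that the integral constraint coming from the equal variances forces exactly one sign change of $\Delta$, in the right direction, so that the single-crossing criterion can be invoked; without this constraint the sign pattern of $\Delta$ could easily be either $+$ throughout (giving the opposite stochastic ordering) or more intricate.
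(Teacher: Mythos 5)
Your proof is correct and follows essentially the same route as the paper's: both use symmetry to convert the four crossings of $f_X$ and $f_Y$ into two crossings (sign pattern $+,-,+$) of the densities of $X^{2}$ and $Y^{2}$ via $f_{X^{2}}(t)=f_X(\sqrt{t})/\sqrt{t}$, and then invoke the Karlin--Novikoff cut criterion together with $E[X^{2}]=E[Y^{2}]$ to conclude $X^{2}\preceq_{cx}Y^{2}$, hence $X^{2}\preceq_{icx}Y^{2}$ --- the only difference being that the paper cites the two-density-crossings form of the criterion (Theorem 3.A.44 in Shaked--Shanthikumar) directly, while you derive the single crossing of the distribution functions by hand before applying it, which is an equivalent formulation rather than a genuinely different argument. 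Your part (b) is the paper's same one-line argument via $\phi(t)=t^{2}$, and your caveat that one only gets $\beta_{2}(X)\leq\beta_{2}(Y)$, with strictness failing when $X^{2}=_{st}Y^{2}$, is in fact a correct refinement of an imprecision (the claimed strict inequality) that the paper's proof silently glosses over.
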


\begin{proof}
a) Under our hypothesis $f_{X^{2}}(t)=\frac{f(\sqrt{t})}{\sqrt{t}}$ for $t>0$%
. Since $X$ and $Y$ are symmetrical, we have that the four intersection
points between the densities $f$ and $g$ are symmetrical with respect to
the origin. Hence the densities of $X^{2}$ and $Y^{2}$ cross in two points
and since $E[X^{2}]=E[Y^{2}]$ from Theorem 3.A.44 in \cite{ss} we have that $%
X^{2}\preceq _{cx}Y^{2}$.\newline
b) In our case $\beta _{2}(X)=E[X^{4}]$ and hence the thesis follows from
the definition of the convex order.\medskip
\end{proof}

This lemma shows that the comparison $X^{2}\preceq _{icx}Y^{2}$ can be
interpreted as a classical kurtosis ordering; the cut condition is usually
referred in the kurtosis ordering literature as a Dyson-Finucan condition
(see for example \cite{finucan} for a review).

\section{Convex comparisons for total logreturns}

In financial applications the variables $X_{n}$ typically represent
logreturns, that are additive quantities. The over-the-period total return
is given by $S_{n}=\sum\limits_{k=1}^{n}X_{n}$. It is therefore natural to
ask if some of the comparison results of Section 2 do extend to the
variables $S_{n}$. In this section we consider the case of the convex order,
that is, we wonder if $\varepsilon _{k}\preceq _{cx}\widetilde{%
\varepsilon }_{k}\Rightarrow S_{n}\preceq _{cx}\widetilde{S}_{n}$. The
problem is not trivial since $S_{n}$ cannot be expressed as a sum of
independent variables, so the standard results about convex ordering of sums
cannot be applied; we are able to prove a positive result in the case of
model M1 and for symmetric innovations. We start with a basic lemma:

\begin{lemma}
\label{BL}Let $\phi \in C^{2}(%
\mathbb{R}
)$ be convex and $g_{i}\in C^{2}(%
\mathbb{R}
)$ be convex and nonnegative. Let $a,b\in R$ and $P_{m}:=\left\{
-1,1\right\} ^{m}$. It follows that
\begin{equation*}
h(u)=\sum\limits_{\underline{p}\in P_{m}}\phi \left(
a+bu+\sum\limits_{i=1}^{m}p_{i}g_{i}(u)\right)
\end{equation*}
is convex.
\end{lemma}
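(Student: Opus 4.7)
The plan is to verify convexity by computing $h''(u)$ directly (everything is $C^2$) and showing the sum is nonnegative. Writing $F_{\underline{p}}(u):=a+bu+\sum_{i=1}^m p_i g_i(u)$, I would differentiate twice to obtain
\[
h''(u) \;=\; \sum_{\underline{p}\in P_m}\phi''\!\bigl(F_{\underline{p}}(u)\bigr)\bigl(F_{\underline{p}}'(u)\bigr)^{2} \;+\; \sum_{\underline{p}\in P_m}\phi'\!\bigl(F_{\underline{p}}(u)\bigr)\,F_{\underline{p}}''(u).
\]
The first sum is manifestly nonnegative because $\phi$ is convex. The obstruction is the second sum: since $F_{\underline{p}}''(u)=\sum_i p_i g_i''(u)$ can be negative whenever some $p_i=-1$, the individual summands in $h$ are not all convex, so one cannot argue termwise.

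To handle the second sum I would reorder summation:
\[
\sum_{\underline{p}\in P_m}\phi'\!\bigl(F_{\underline{p}}(u)\bigr)F_{\underline{p}}''(u)
\;=\;\sum_{i=1}^m g_i''(u)\sum_{\underline{p}\in P_m} p_i\,\phi'\!\bigl(F_{\underline{p}}(u)\bigr).
\]
Since $g_i''\ge 0$, it suffices to prove the inner bracket $\sum_{\underline{p}} p_i\,\phi'(F_{\underline{p}}(u))\ge 0$ for each $i$. This is where the hypothesis $g_i\ge 0$ (which so far has not been used) comes in, together with convexity of $\phi$ in the form that $\phi'$ is nondecreasing.

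The argument is a sign-pairing: for fixed $i$, couple each $\underline{p}$ with $i$-th coordinate $+1$ to the pattern $\underline{p}^{(i)}$ obtained by flipping that coordinate. The contribution of the pair is
\[
\phi'\!\bigl(F_{\underline{p}}(u)\bigr)-\phi'\!\bigl(F_{\underline{p}^{(i)}}(u)\bigr),
\]
and because $F_{\underline{p}}(u)-F_{\underline{p}^{(i)}}(u)=2g_i(u)\ge 0$, monotonicity of $\phi'$ makes this nonnegative. Summing over the $2^{m-1}$ pairs yields $\sum_{\underline{p}} p_i\,\phi'(F_{\underline{p}}(u))\ge 0$, hence $h''(u)\ge 0$.

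The main (mild) obstacle is recognising that one must not try to prove convexity summand-by-summand; the signs inside $F_{\underline{p}}''$ force a global pairing argument. Once that is seen, the rest is a one-line computation. A brief remark at the end would note that the $C^{2}$ hypothesis is only a convenience (a standard mollification/approximation argument removes it), but since the lemma is stated in $C^{2}$ I would not pursue that extension here.
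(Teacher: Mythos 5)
Your proposal is correct. You and the paper compute the same $h''(u)$ and isolate the same problematic term, but you diverge at the key step. The paper rewrites that term probabilistically: introducing a random vector $\underline{P}$ uniform on $P_{m}$ with independent components and $E[\underline{P}]=\underline{0}$, it expresses the sum as $2^{m}E\bigl[\phi'\bigl(a+bu+\underline{g}(u)\cdot\underline{P}\bigr)\bigl(\underline{g}''(u)\cdot\underline{P}\bigr)\bigr]$ and invokes the covariance (Harris/FKG-type) inequality for componentwise increasing functions of independent variables, so the expectation is bounded below by the product of expectations, which vanishes. You instead exchange the order of summation and prove the inner nonnegativity $\sum_{\underline{p}}p_{i}\,\phi'\bigl(F_{\underline{p}}(u)\bigr)\geq 0$ directly, by pairing each sign pattern with its $i$-th flip and using $F_{\underline{p}}(u)-F_{\underline{p}^{(i)}}(u)=2g_{i}(u)\geq 0$ together with monotonicity of $\phi'$. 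Your pairing is in effect an elementary, self-contained proof of exactly the special case of the covariance inequality that the paper cites as a black box; it has the merit of making completely explicit where the two hypotheses $g_{i}\geq 0$ and convexity of $\phi$ (via $\phi'$ nondecreasing) enter, and of requiring no probabilistic apparatus. What the paper's formulation buys is reusability: the same expectation representation and the multivariate covariance inequality are deployed again verbatim in the proof of Lemma \ref{EBL}, with $\phi'$ replaced by $\nabla\phi$ and supermodularity guaranteeing the componentwise monotonicity, whereas your coordinatewise pairing would have to be adapted there (it would still work, coordinate by coordinate, but less uniformly). Your closing remark that the $C^{2}$ assumptions are only a convenience matches the paper's own remark following the lemma.
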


\begin{proof}
We can compute%
\begin{equation*}
h^{\prime }(u)=\sum\limits_{\underline{p}\in P_{m}}\phi ^{\prime }\left(
a+bu+\sum\limits_{i=1}^{m}p_{i}g_{i}(u)\right) \cdot \left(
b+\sum\limits_{i=1}^{m}p_{i}g_{i}^{\prime }(u)\right)
\end{equation*}%
\begin{eqnarray*}
h^{\prime \prime }(u) &=&\sum\limits_{\underline{p}\in P_{m}}\phi ^{\prime
\prime }\left( a+bu+\sum\limits_{i=1}^{m}p_{i}g_{i}(u)\right) \cdot \left(
b+\sum\limits_{i=1}^{m}p_{i}g_{i}^{\prime }(u)\right) ^{2}+ \\
&&\sum\limits_{\underline{p}\in P_{m}}\phi ^{\prime }\left(
a+bu+\sum\limits_{i=1}^{m}p_{i}g_{i}(u)\right) \cdot \left(
\sum\limits_{i=1}^{m}p_{i}g_{i}^{\prime \prime }(u)\right)
\end{eqnarray*}%
The first term is positive; the second is given by%
\begin{equation*}
A_{m}=\sum\limits_{\underline{p}\in P_{m}}\phi ^{\prime }\left(
a+bu+\sum\limits_{i=1}^{m}p_{i}g_{i}(u)\right) \cdot \left(
\sum\limits_{i=1}^{m}p_{i}g_{i}^{\prime \prime }(u)\right)
\end{equation*}%
Let us denote with $\underline{P}$ a random vector with a discrete uniform
distribution on $P_{m}$; clearly $E[\underline{P}]=\underline{0}$, the
components of $\underline{P}$ are independent and
\begin{equation*}
A_{m}=2^{m}E\left[ \phi ^{\prime }\left( a+bu+\underline{g}(u)\cdot
\underline{P}\right) \left( \underline{g}^{\prime \prime }(u)\cdot
\underline{P}\right) \right]
\end{equation*}%
Since the functions $\phi ^{\prime }\left( a+bu+\underline{g}(u)\cdot
\underline{P}\right) $ and $\underline{g}^{\prime \prime }(u)\cdot
\underline{P}$ are componentwise increasing, from the covariance inequality
it follows that%
\begin{eqnarray*}
A_{m} &=&2^{m}E\left[ \phi ^{\prime }\left( \underline{g}(u)\cdot \underline{%
P}\right) \left( \underline{g}^{\prime \prime }(u)\cdot \underline{P}\right) %
\right] \geq \\
&\geq &2^{m}E\left[ \phi ^{\prime }\left( \underline{g}(u)\cdot \underline{P}%
\right) \right] E\left[ \left( \underline{g}^{\prime \prime }(u)\cdot
\underline{P}\right) \right] =0
\end{eqnarray*}
\end{proof}

We remark that in this lemma the smoothness requirements on $\phi $ and on
the $g_{i}$ can be dropped; we preferred this formulation in order to
simplify the proof. Since in this section we consider only model M1, we
define
\begin{equation*}
g_{n}(\sigma _{0},\varepsilon _{0},...,\varepsilon _{n-1}):=g_{n}^{I}(\sigma
_{0},\left\vert \varepsilon _{0}\right\vert ,...,\left\vert \varepsilon
_{n-1}\right\vert )\text{;}
\end{equation*}%
from Lemma \ref{gienneL}, it is clear that $g_{n}$ is even and ccx. We have

\begin{eqnarray}
S_{n} &=&X_{0}+X_{1}+...+X_{n}=\sigma _{0}\varepsilon _{0}+\sigma
_{1}\varepsilon _{1}+...+\sigma _{n}\varepsilon _{n}=  \notag \\
&&\sigma _{0}\varepsilon _{0}+g_{1}(\sigma _{0},\varepsilon _{0})\varepsilon
_{1}+...+g_{n}(\sigma _{0},\varepsilon _{0},...,\varepsilon
_{n-1})\varepsilon _{n}=  \label{sums} \\
&&S_{n}(\sigma _{0},\varepsilon _{0},...,\varepsilon _{n})  \notag
\end{eqnarray}%
The main problem in proving the propagation of convexity to the sums is that
$S_{n}$ is not a ccx functions of the innovations $\varepsilon _{k}$;
indeed, each $g_{k}$ in (\ref{sums}) is multiplied by a possibly negative
innovation $\varepsilon _{k}$. This prevents the applications of standard
results and requires the development of a specific technique based on Lemma %
\ref{BL}. The basic idea is that in the case of symmetric innovations it is
possible to restore the convexity by averaging over all the possible sign
changes, as in Lemma \ref{BL}. This will be done in a recursive way; we
start with the following:

\begin{lemma}
\label{convexity}Let $X_{n}$ and $S_{n}$ be as in (\ref{garch1}) and (\ref%
{sums}). Let $\phi $ be convex and $\varepsilon _{i}$ be symmetric. Then the
function
\begin{equation}
h(\sigma _{0},\varepsilon _{0},...,\varepsilon _{k}):=E_{\varepsilon
_{k+1},...,\varepsilon _{n}}[\phi (S_{n}(\sigma _{0},\varepsilon
_{0},...,\varepsilon _{n}))]  \label{acca}
\end{equation}%
\newline
is convex in $\varepsilon _{k}$ for each fixed value of $\sigma
_{0},\varepsilon _{0},...,\varepsilon _{k-1}$.
\end{lemma}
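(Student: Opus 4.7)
The plan is to fix $\sigma_0, \varepsilon_0, \ldots, \varepsilon_{k-1}$ and to view $h$ as a function of the single variable $u := \varepsilon_k$, showing its convexity directly. Using (\ref{sums}), I would decompose
\begin{equation*}
S_n(u) \;=\; a + bu + \sum_{j=k+1}^{n} G_j\bigl(u;\varepsilon_{k+1},\ldots,\varepsilon_{j-1}\bigr)\,\varepsilon_j,
\end{equation*}
where $a := \sum_{j<k}\sigma_j\varepsilon_j$ and $b := g_k(\sigma_0,\varepsilon_0,\ldots,\varepsilon_{k-1})$ are constants in $u$, and $G_j(u;\cdot)$ denotes $g_j$ with the slot corresponding to $\varepsilon_k$ filled by $u$ and the other slots frozen. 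By Lemma \ref{gienneL}, $g_j$ is increasing and ccx and depends on its innovation arguments only through their absolute values; since $|\cdot|$ is convex and $g_j$ is increasing and convex in each slot, the composition $G_j(u;\cdot)$ is a nonnegative convex function of $u$ for every fixed value of the other innovations. Note that $S_n(u)$ itself is \emph{not} convex in $u$, because each convex $G_j(u;\cdot)$ is multiplied by a possibly negative $\varepsilon_j$.

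Next I would condition on the magnitudes $|\varepsilon_{k+1}|, \ldots, |\varepsilon_n|$ and use the symmetry of the innovations. By symmetry, the signs $\mathrm{sign}(\varepsilon_{k+1}), \ldots, \mathrm{sign}(\varepsilon_n)$ are i.i.d.\ uniform on $\{-1,+1\}$, independent of the magnitudes. Two facts make the conditioning effective: first, $G_j$ depends on $\varepsilon_{k+1}, \ldots, \varepsilon_{j-1}$ only through their absolute values, so after conditioning each $G_j$ becomes a deterministic, nonnegative, convex function of $u$ alone; second, setting $\tilde{g}_j(u) := |\varepsilon_{k+j}|\,G_{k+j}(u;\cdot)$ (still nonnegative and convex in $u$), the conditional expectation of $\phi(S_n(u))$ given the magnitudes becomes
\begin{equation*}
\frac{1}{2^{n-k}}\sum_{\underline{p}\,\in\, P_{n-k}} \phi\!\left(a + bu + \sum_{j=1}^{n-k} p_j\,\tilde{g}_j(u)\right),
\end{equation*}
which is precisely the object treated in Lemma \ref{BL}.

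Lemma \ref{BL} (with the smoothness hypotheses dropped as noted in the remark following it) then yields convexity in $u$ of this conditional expectation for every fixed choice of magnitudes. Taking the outer expectation over $|\varepsilon_{k+1}|, \ldots, |\varepsilon_n|$ preserves convexity, so $h(u)$ is convex, as required. The main obstacle is exactly the failure of convexity of $S_n(u)$ itself: the signed factors $\varepsilon_j$ multiplying the convex $G_j(u;\cdot)$ block the standard composition argument. Overcoming this requires the symmetry of the innovations, which allows averaging over the $2^{n-k}$ sign patterns; that averaging is what Lemma \ref{BL} exploits via its covariance-inequality step, and it is the reason the symmetry assumption cannot be relaxed here.
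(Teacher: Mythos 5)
Your proof is correct and takes essentially the same route as the paper: the paper likewise uses the symmetry of the innovations and the evenness of the $g_j$ to rewrite the expectation as an average of $\phi$ over the $2^{n-k}$ sign patterns with $\varepsilon_{k+1},\dots,\varepsilon_n$ restricted to the positive orthant, applies Lemma \ref{BL} with the same identification of $a$, $b$, and the nonnegative convex functions $\tilde{g}_j(u)$, and concludes by averaging over the remaining innovations. Your conditioning on the magnitudes $\left\vert \varepsilon_{k+1}\right\vert,\dots,\left\vert \varepsilon_n\right\vert$ is just a more explicitly probabilistic phrasing of the paper's indicator $1_{\{\varepsilon_{k+1}\geq 0,\dots,\varepsilon_n\geq 0\}}$ manipulation.
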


\begin{proof}
To avoid notational burdening we drop the arguments of the functions $g_{i}$%
. Since the innovations are symmetric and $g_{i}$ is even, we can write%
\begin{equation*}
E_{\varepsilon _{k+1},...,\varepsilon _{n}}[\phi (S_{n}(\sigma
_{0},\varepsilon _{0},...,\varepsilon _{n}))]=E_{\varepsilon
_{k+1},...,\varepsilon _{n}}[\phi (\sigma _{0}\varepsilon
_{0}+...+g_{n}\varepsilon _{n})]=
\end{equation*}%
\begin{equation*}
E_{\varepsilon _{k+1},...,\varepsilon _{n}}[\sum\limits_{\underline{p}\in
P_{n-k}}\phi (\sigma _{0}\varepsilon _{0}+...+p_{1}g_{k+1}\varepsilon
_{k+1}+...+p_{n-k}g_{n}\varepsilon _{n})1_{\left\{ \varepsilon _{k+1}\geq
0,...,\varepsilon _{n}\geq 0\right\} }]\text{.}
\end{equation*}%
Denoting by
\begin{equation*}
\overline{h}(\sigma _{0},\varepsilon _{0},...,\varepsilon
_{k},...,\varepsilon _{n})=\sum\limits_{\underline{p}\in P_{n-k}}\phi
(\sigma _{0}\varepsilon _{0}+g_{1}\varepsilon _{1}+...+g_{k}\varepsilon
_{k}+p_{1}g_{k+1}\varepsilon _{k+1}+...+p_{n-k}g_{n}\varepsilon _{n})\text{,}
\end{equation*}%
we have that
\begin{equation*}
h(\sigma _{0},\varepsilon _{0},...,\varepsilon _{k-1},\varepsilon
_{k})=E_{\varepsilon _{k+1},...,\varepsilon _{n}}[1_{\left\{ \varepsilon
_{k+1}\geq 0,...,\varepsilon _{n}\geq 0\right\} }\overline{h}(\sigma
_{0},\varepsilon _{0},...,\varepsilon _{k},...,\varepsilon _{n})]
\end{equation*}%
and $\overline{h}$ is convex in $\varepsilon _{k}$ from Lemma \ref{BL}. It
follows that also $h(\sigma _{0},\varepsilon _{0},...,\varepsilon
_{k-1},\varepsilon _{k})$ is convex in $\varepsilon _{k}$ for each value of $%
\sigma _{0},\varepsilon _{0},...,\varepsilon _{k-1}$.\bigskip
\end{proof}

We can finally state the result on the propagation of the convex order to $%
S_{n}$:

\begin{theorem}
\label{compth}Let $X_{n}$ and $S_{n}$ be as in (\ref{garch1}) and (\ref{sums}%
). Let $\varepsilon _{i}$ be symmetric. If also $\widetilde{\varepsilon }%
_{k} $ is symmetric and $\widetilde{\varepsilon }_{k}$ $\geq
_{cx}\varepsilon _{k} $, then $\widetilde{S}_{n}:=S_{n}(\sigma
_{0},\varepsilon _{0},.,\widetilde{\varepsilon }_{k},...,\varepsilon
_{n})\geq _{cx}S_{n}(\sigma _{0},\varepsilon _{0},.,\varepsilon
_{k},...,\varepsilon _{n})$.
\end{theorem}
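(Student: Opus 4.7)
The plan is to use Lemma \ref{convexity} as the main engine and reduce the convex order comparison to a one‐variable conditional argument, exploiting the independence of the innovations.

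First, fix an arbitrary convex function $\phi$ for which the relevant expectations exist. By the definition of convex order, it suffices to show that $E[\phi(S_n)] \leq E[\phi(\widetilde{S}_n)]$. Since the innovations $\varepsilon_0,\dots,\varepsilon_n$ (and $\widetilde{\varepsilon}_k$) are mutually independent and also independent of $\sigma_0$, I can condition on the entire collection $(\sigma_0,\varepsilon_0,\dots,\varepsilon_{k-1})$ and integrate out $\varepsilon_{k+1},\dots,\varepsilon_n$. Writing
\begin{equation*}
h(\sigma_0,\varepsilon_0,\dots,\varepsilon_k)
 := E_{\varepsilon_{k+1},\dots,\varepsilon_n}\bigl[\phi(S_n(\sigma_0,\varepsilon_0,\dots,\varepsilon_n))\bigr],
\end{equation*}
Fubini gives $E[\phi(S_n)] = E_{\sigma_0,\varepsilon_0,\dots,\varepsilon_{k-1}}\bigl[E_{\varepsilon_k}[h(\sigma_0,\varepsilon_0,\dots,\varepsilon_{k-1},\varepsilon_k)]\bigr]$, and the analogous identity holds for $\widetilde{S}_n$ with $\varepsilon_k$ replaced by $\widetilde{\varepsilon}_k$ in the inner expectation (the inner $h$ is the same function because $\varepsilon_{k+1},\dots,\varepsilon_n$ are i.i.d.\ and unchanged).

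Next, by Lemma \ref{convexity}, the function $\varepsilon_k \mapsto h(\sigma_0,\varepsilon_0,\dots,\varepsilon_{k-1},\varepsilon_k)$ is convex for every fixed value of $(\sigma_0,\varepsilon_0,\dots,\varepsilon_{k-1})$. Since $\varepsilon_k \preceq_{cx} \widetilde{\varepsilon}_k$ and $\widetilde{\varepsilon}_k$ is independent of $(\sigma_0,\varepsilon_0,\dots,\varepsilon_{k-1})$, applying the definition of the convex order to this one‐dimensional convex function of $\varepsilon_k$ (with the other coordinates treated as parameters) yields
\begin{equation*}
E_{\varepsilon_k}\bigl[h(\sigma_0,\varepsilon_0,\dots,\varepsilon_{k-1},\varepsilon_k)\bigr]
 \leq E_{\widetilde{\varepsilon}_k}\bigl[h(\sigma_0,\varepsilon_0,\dots,\varepsilon_{k-1},\widetilde{\varepsilon}_k)\bigr]
\end{equation*}
pointwise in $(\sigma_0,\varepsilon_0,\dots,\varepsilon_{k-1})$. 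Taking the outer expectation preserves the inequality and gives exactly $E[\phi(S_n)] \leq E[\phi(\widetilde{S}_n)]$. Since $\phi$ was an arbitrary convex function, $S_n \preceq_{cx} \widetilde{S}_n$.

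The main obstacle has already been handled by Lemma \ref{convexity}: the point is that the partial expectation $h$ is not jointly convex in all variables, only separately convex in $\varepsilon_k$ with the previous coordinates held fixed, and this is precisely the form needed for the one‐dimensional convex‐order step. The symmetry hypothesis on $\widetilde{\varepsilon}_k$ itself is not actually used in this final argument — it is used earlier, inside Lemma \ref{convexity}, to perform the sign‐averaging that restores convexity of the inner expectation. A small technical point worth noting is that $E[\widetilde{\varepsilon}_k]=E[\varepsilon_k]=0$ (implied by $\varepsilon_k \preceq_{cx}\widetilde{\varepsilon}_k$ together with symmetry of both), so $\widetilde{\varepsilon}_k$ is a legitimate replacement innovation in model (\ref{garch1}).
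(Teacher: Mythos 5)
Your proposal is correct and follows essentially the same route as the paper's own proof: condition on $(\sigma_0,\varepsilon_0,\dots,\varepsilon_{k-1})$, integrate out $\varepsilon_{k+1},\dots,\varepsilon_n$ to form $h$, invoke Lemma \ref{convexity} for convexity of $h$ in the $k$-th innovation, apply the definition of the convex order pointwise, and take the outer expectation. (One side remark is slightly imprecise: the sign-averaging in Lemma \ref{convexity} uses the symmetry of the innovations being integrated out, $\varepsilon_{k+1},\dots,\varepsilon_n$, not that of $\widetilde{\varepsilon}_k$ itself — but this does not affect the validity of your argument.)
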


\begin{proof}
Let $\phi $ be convex. From the independence of the $\varepsilon _{i}$ we
can write
\begin{eqnarray*}
E[\phi (\widetilde{S}_{n})] &=&E_{\varepsilon _{0},...,\varepsilon _{k-1}}E_{%
\widetilde{\varepsilon }_{k}}E_{\varepsilon _{k+1},...,\varepsilon
_{n}}[\phi (S_{n}(\sigma _{0},\varepsilon _{0},...,\varepsilon _{k-1},%
\widetilde{\varepsilon }_{k},\varepsilon _{k+1},...,\varepsilon _{n}))]= \\
&=&E_{\varepsilon _{0},...,\varepsilon _{k-1}}E_{\widetilde{\varepsilon }%
_{k}}[h(\sigma _{0},\varepsilon _{0},...,\varepsilon _{k-1},\widetilde{%
\varepsilon }_{k})]
\end{eqnarray*}%
\newline
where as in (\ref{acca})
\begin{equation*}
h(\sigma _{0},\varepsilon _{0},...,\varepsilon _{k-1},\widetilde{\varepsilon
}_{k}):=E_{\varepsilon _{k+1},...,\varepsilon _{n}}[\phi (S_{n}(\sigma
_{0},\varepsilon _{0},...,\varepsilon _{k-1},\widetilde{\varepsilon }%
_{k},\varepsilon _{k+1},...,\varepsilon _{n}))]
\end{equation*}%
is a convex function of $\widetilde{\varepsilon }_{k}$ for each value of $%
\sigma _{0},\varepsilon _{0},...,\varepsilon _{k-1}$ from Lemma \ref%
{convexity}. \newline
It follows that
\begin{equation*}
E_{\widetilde{\varepsilon }_{k}}[h(\sigma _{0},\varepsilon
_{0},...,\varepsilon _{k-1},\widetilde{\varepsilon }_{k})]\geq
E_{\varepsilon _{k}}[h(\sigma _{0},\varepsilon _{0},...,\varepsilon
_{k-1},\varepsilon _{k})]
\end{equation*}%
\newline
that gives
\begin{eqnarray*}
E[\phi (\widetilde{S}_{n})] &=&E_{\varepsilon _{0},...,\varepsilon _{k-1}}E_{%
\widetilde{\varepsilon }_{k}}[h(\sigma _{0},\varepsilon _{0},...,\varepsilon
_{k-1},\widetilde{\varepsilon }_{k})]\geq \\
&\geq &E_{\varepsilon _{0},...,\varepsilon _{k-1}}E_{\varepsilon
_{k}}[h(\sigma _{0},\varepsilon _{0},...,\varepsilon _{k-1},\varepsilon
_{k})]=E[\phi (S_{n})]
\end{eqnarray*}%
\newline
that is $\widetilde{S}_{n}\geq _{cx}S_{n}$.
\end{proof}

\section{Multivariate comparisons of logreturns}

Until now we have been considering only univariate orderings of the $X_{n}$.
Having established also a convex comparison result for the sums $S_{n}$, it
is natural to wonder whether more general multivariate comparisons do hold,
that is if we can prove that $\varepsilon _{k}\leq _{cx}\widetilde{%
\varepsilon }_{k}\Rightarrow (X_{1},...,X_{n})\preceq (\widetilde{X}_{1},...,%
\widetilde{X}_{n})$ for some multivariate convexity ordering $\preceq $ to
be precise in the sequel. Before stating a positive result, we recall two
basic definitions:

\begin{definition}
A function $\varphi :%
\mathbb{R}
^{n}\rightarrow
\mathbb{R}
$ is directionally convex\emph{\/} if for any $\mathbf{x}_{1},...,\mathbf{x}%
_{4}\in
\mathbb{R}
^{n}$, such that $\mathbf{x}_{1}\leq \mathbf{x}_{2},\mathbf{x}_{3}\leq
\mathbf{x}_{4}$ and $\mathbf{x}_{1}+\mathbf{x}_{4}=\mathbf{x}_{2}+\mathbf{x}%
_{3}$, it holds that
\begin{equation*}
\varphi (\mathbf{x}_{2})+\varphi (\mathbf{x}_{3})\leq \varphi (\mathbf{x}%
_{1})+\varphi (\mathbf{x}_{4})
\end{equation*}
\end{definition}

\begin{definition}
A function $\varphi :%
\mathbb{R}
^{n}\rightarrow
\mathbb{R}
{}$ is supermodular\emph{\/} if for any $\mathbf{x},\mathbf{y}\in
\mathbb{R}
^{n}$ it satisfies:
\begin{equation*}
\varphi (\mathbf{x})+\varphi (\mathbf{y})\leq \varphi (\mathbf{x}\wedge
\mathbf{y})+\varphi (\mathbf{x}\vee \mathbf{y}),
\end{equation*}%
where the operators $\wedge $ and $\vee $ denote respectively coordinatewise
minimum and maximum (see Section 7.A.8 of \cite{ss})
\end{definition}

In the univariate case directionally convexity is equivalent to convexity,
while in the multivariate case there are no implications between the two
concepts. Directionally convexity implies supermodularity and its equivalent
to supermodularity plus componentwise convexity. For smooth functions,
directionally convexity is equivalent to the nonnegativity of all entries in
the Hessian matrix, while supermodularity is equivalent to the the
nonnegativity of all entries out of the principal diagonal. Clearly no
implications exist between this concept and the usual convexity of $\phi $,
that corresponds to the positive semidefiniteness of the Hessian matrix.
However, $\phi $ is directionally convex and convex if and only if it is
supermodular and convex. Finally, in the smooth case, $\varphi $ is
directionally convex if and only if its gradient is increasing.

In order to establish multivariate comparison results, we need a
generalization of Lemma \ref{BL}:

\begin{lemma}
\textbf{\ \label{EBL}}Let $\phi \in C^{2}(%
\mathbb{R}
^{n})$ be convex and supermodular and $g_{i}\in C^{2}(%
\mathbb{R}
)$ be convex and nonnegative. Let $P_{n}:=\left\{ -1,1\right\} ^{n}$. It
follows that
\begin{equation*}
h(u)=\sum\limits_{\underline{p}\in P_{n}}\phi
(p_{1}g_{1}(u),...,p_{n}g_{n}(u))
\end{equation*}
is convex.
\end{lemma}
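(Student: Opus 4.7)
The plan is to directly generalize the proof of Lemma \ref{BL}: differentiate $h$ twice, split $h''(u)$ into a ``Hessian term'' and a ``gradient term'', and argue that each is nonnegative. Writing $\underline{g}(u)=(g_1(u),\dots,g_n(u))$ and $\underline{p}\cdot\underline{g}(u)=(p_1g_1(u),\dots,p_ng_n(u))$, the chain rule gives
$$h''(u) \;=\; \sum_{\underline{p}\in P_n} v(\underline{p})^{T} H_\phi\bigl(\underline{p}\cdot\underline{g}(u)\bigr)\, v(\underline{p}) \;+\; \sum_{\underline{p}\in P_n}\sum_{i=1}^{n} \partial_i\phi\bigl(\underline{p}\cdot\underline{g}(u)\bigr)\, p_i\, g_i''(u),$$
where $v_i(\underline{p})=p_i g_i'(u)$. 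For each $\underline{p}$ the first sum is a quadratic form in the Hessian of $\phi$, so it is nonnegative by convexity of $\phi$. It therefore remains to handle the second sum, call it $A$.

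For $A$, I would introduce, exactly as in Lemma \ref{BL}, a random vector $\underline{P}$ uniformly distributed on $P_n$, whose components are then independent with $E[P_i]=0$, and rewrite
$$A \;=\; 2^{n}\sum_{i=1}^{n} g_i''(u)\, E\!\left[\partial_i\phi\bigl(\underline{P}\cdot\underline{g}(u)\bigr)\,P_i\right].$$
Since $g_i''(u)\ge 0$, it suffices to show each expectation is nonnegative. By the covariance (Harris/FKG) inequality for product measures, this holds as soon as both $P_i$ and $\partial_i\phi(\underline{P}\cdot\underline{g}(u))$ are componentwise increasing functions of $\underline{P}$, because then $E[\partial_i\phi\cdot P_i]\ge E[\partial_i\phi]\cdot E[P_i]=0$. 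Monotonicity of $P_i$ in $\underline{P}$ is trivial.

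The monotonicity check for $\partial_i\phi(\underline{P}\cdot\underline{g}(u))$ is where the full hypothesis on $\phi$ gets used, and it is the main obstacle. Since $g_j(u)\ge 0$, flipping $P_j$ from $-1$ to $+1$ increases the $j$-th coordinate of the argument, so $\partial_i\phi$ will be increasing in $P_j$ precisely when $\partial_{ij}\phi\ge 0$. For $j=i$ this follows from convexity of $\phi$ (the diagonal of a positive semidefinite Hessian is nonnegative); for $j\neq i$ it is exactly the supermodularity assumption (all off-diagonal Hessian entries nonnegative). Together these give the componentwise monotonicity needed to invoke the covariance inequality and conclude $A\ge 0$, hence $h''(u)\ge 0$. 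The role of supermodularity—invisible in the univariate Lemma \ref{BL}—is precisely to supply the off-diagonal nonnegativities, and I would flag this as the substantive new input. As in Lemma \ref{BL}, the $C^2$ regularity on $\phi$ and on the $g_i$ is only for notational convenience and can be removed by a standard smoothing argument.
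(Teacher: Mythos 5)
Your proposal is correct and follows essentially the same route as the paper's proof: the same splitting of $h''(u)$ into a Hessian quadratic form (nonnegative by convexity of $\phi$) and a gradient term handled via the multivariate covariance inequality for the uniform random sign vector $\underline{P}$. Your version is in fact slightly more explicit than the paper's, since you expand the gradient term index by index and verify the componentwise monotonicity of $\partial_i\phi$ from the Hessian sign conditions (diagonal entries from convexity, off-diagonal from supermodularity), where the paper simply asserts that all components of $\nabla\phi$ are increasing in the $p_i$.
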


\begin{proof}
Let's denote by $y_{i}$ the arguments of the function $\phi $; we can write:%
\begin{equation*}
h^{\prime }(u)=\sum\limits_{\underline{p}\in
P_{m}}\sum\limits_{i=1}^{m}p_{i}g_{i}^{\prime }(u)\frac{\partial \phi }{%
\partial y_{i}}(p_{1}g_{1}(u),...,p_{m}g_{m}(u))
\end{equation*}

\begin{eqnarray*}
h^{\prime \prime }(u) &=&\sum\limits_{\underline{p}\in
P_{m}}[\sum\limits_{i=1}^{m}p_{i}g_{i}^{\prime \prime }(u)\frac{\partial
\phi }{\partial y_{i}}(p_{1}g_{1}(u),...,p_{m}g_{m}(u))+ \\
&&+\sum\limits_{i=1}^{m}\sum\limits_{j=1}^{m}p_{i}p_{j}g_{i}^{\prime
}(u)g_{j}^{\prime }(u)\frac{\partial ^{2}\phi }{\partial y_{i}\partial y_{j}}%
(p_{1}g_{1}(u),...,p_{m}g_{m}(u))]
\end{eqnarray*}%
The second term in the square brackets can be written as%
\begin{equation*}
\sum\limits_{i=1}^{m}\sum\limits_{j=1}^{m}p_{i}p_{j}g_{i}^{\prime
}(u)g_{j}^{\prime }(u)\frac{\partial ^{2}\phi }{\partial y_{i}\partial y_{j}}%
(p_{1}g_{1}(u),...,p_{m}g_{m}(u))=\underline{Pg}^{\prime }(u)\left[\frac{\partial
^{2}\phi }{\partial y_{i}\partial y_{j}}(..)\right](\underline{Pg}^{\prime
}(u))^{T}\geq 0
\end{equation*}%
\newline
and is positive since the Hessian of $\phi $ is positive semidefinite.%
\newline
The first term can be written as
\begin{equation*}
\sum\limits_{\underline{p}\in P_{m}}\sum\limits_{i=1}^{m}p_{i}g_{i}^{\prime
\prime }(u)\frac{\partial \phi }{\partial y_{i}}%
(p_{1}g_{1}(u),...,p_{m}g_{m}(u))=2^{m}E\left[ \nabla \phi \left(
p_{1}g_{1}(u),...,p_{n}g_{n}(u)\right) \cdot \left( \underline{Pg}^{\prime
\prime }(u)\right) \right]
\end{equation*}%
\newline
From the hypothesis on $\phi $ all the components of $\nabla \phi $ are
increasing in $p_{i}$, hence from the multivariate covariance inequality
\begin{equation*}
E\left[ \nabla \phi \left( p_{1}g_{1}(u),...,p_{n}g_{n}(u)\right) \cdot
\left( \underline{Pg}^{\prime \prime }(u)\right) \right] \geq 0
\end{equation*}
\end{proof}

As in Lemma \ref{BL}, the smoothness requirements on $\phi $ and $g_{i}$ can
be dropped, but we added them in order to simplify the proof. The
multivariate analogue of Lemma \ref{convexity} is the following:

\begin{lemma}
Let $X_{n}$ and $S_{n}$ be as in (\ref{garch1}) and (\ref{sums}). Let $\phi
:R^{n+1}\rightarrow R$ be supermodular and convex and $\varepsilon _{i}$ be
symmetric. Then the function
\begin{equation*}
h_{k}(x):=E\left[ \phi (X_{0},...,X_{n})|\varepsilon _{k}=x\right] \text{;}
\end{equation*}%
\newline
is convex.
\end{lemma}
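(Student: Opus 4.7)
The plan is to adapt the symmetrization-plus-Lemma~\ref{EBL} argument used in Lemma~\ref{convexity} to this multivariate setting. First I would condition on $(\varepsilon_0,\ldots,\varepsilon_{k-1})$, so that $X_0,\ldots,X_{k-1}$ and the volatility $\sigma_k$ are fixed, and then on the absolute values $|\varepsilon_{k+1}|,\ldots,|\varepsilon_n|$. By the tower property and linearity, it suffices to show that the innermost conditional expectation, namely the average over the signs $p_j := \mathrm{sign}(\varepsilon_{k+j})$ (which are i.i.d.\ Rademacher by symmetry of the innovations), is a convex function of $x$.

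Next I would unpack the dependence of $X_0,\ldots,X_n$ on $x$ at this conditional level. The indices $i<k$ contribute constants. For $i=k$ we have $X_k=\sigma_k\,x$, which is \emph{affine} in $x$. For $i>k$, Lemma~\ref{gienneL} gives $\sigma_i = g_i^{I}(\sigma_0,|\varepsilon_0|,\ldots,|x|,\ldots,|\varepsilon_{i-1}|)$, which (as a function of $x$) is the composition of an increasing ccx function with $|x|$, hence convex and nonnegative. Setting $\tilde g_j(x):=\sigma_{k+j}\,|\varepsilon_{k+j}|$ for $j=1,\ldots,n-k$, each $\tilde g_j$ is convex and nonnegative in $x$, and $X_{k+j} = p_j\,\tilde g_j(x)$. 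So the sign-symmetrized sum takes the form
\begin{equation*}
H(x)=\sum_{\underline p \in P_{n-k}} \phi\bigl(c_0,\ldots,c_{k-1},\sigma_k x,\,p_1\tilde g_1(x),\ldots,p_{n-k}\tilde g_{n-k}(x)\bigr),
\end{equation*}
with fixed constants $c_0,\ldots,c_{k-1}$ and $\sigma_k\ge 0$.

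I would then show $H''(x)\geq 0$ by repeating the computation in the proof of Lemma~\ref{EBL}. The second-order ``Hessian'' term is a quadratic form $\underline v^{\,T}\nabla^2\phi\,\underline v$ with $\underline v=(0,\ldots,0,\sigma_k,p_1\tilde g_1'(x),\ldots,p_{n-k}\tilde g_{n-k}'(x))$; it is nonnegative since $\phi$ is convex. The first-order ``correction'' term picks up a zero from the affine coordinate $\sigma_k x$ and from each fixed $c_i$, and reduces to
\begin{equation*}
2^{n-k}\,E\!\left[\sum_{j=1}^{n-k} p_j\,\tilde g_j''(x)\,\partial_{y_{k+j}}\phi(\,\cdot\,)\right].
\end{equation*}
By supermodularity-plus-convexity, each $\partial_{y_{k+j}}\phi$ is coordinatewise increasing in the last $n-k$ arguments, hence increasing in each $p_\ell$; the vector $\bigl(p_j\tilde g_j''(x)\bigr)_j$ is trivially increasing in the corresponding $p_j$. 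The multivariate covariance inequality applied to the product measure of the Rademacher signs then gives the required nonnegativity, exactly as in Lemma~\ref{EBL}. Averaging $H(x)$ over $|\varepsilon_{k+1}|,\ldots,|\varepsilon_n|$ and $\varepsilon_0,\ldots,\varepsilon_{k-1}$ preserves convexity, yielding convexity of $h_k$.

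The main obstacle is verifying that the argument of Lemma~\ref{EBL} still works once $\phi$ is evaluated at a point whose first $k$ coordinates are constant and whose $(k\!+\!1)$-th coordinate is merely affine rather than of the form $p\cdot g(x)$; this requires a careful bookkeeping of which derivatives of $\phi$ survive in the first-order correction term and a check that the supermodularity assumption is still enough to make those specific partials monotone in the sign vector. Once this generalized version of Lemma~\ref{EBL} is in place, the rest of the proof is a direct analogue of Lemma~\ref{convexity}.
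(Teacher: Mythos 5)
Your proof is correct, and it in fact repairs a small but genuine glitch in the paper's own argument. The paper symmetrizes over \emph{all} the signs $p_0,\dots,p_n$, writing the $k$-th coordinate as $g_k p_k x$ so that Lemma \ref{EBL} applies verbatim (with $g_k\left\vert x\right\vert$ in the role of the nonnegative convex coordinate function). But $x$ is the conditioning value and is not averaged, so summing over $p_k=\pm 1$ is not an identity for $h_k(x)$: it produces $h_k(x)+h_k(-x)$, i.e.\ the paper's computation establishes convexity of the \emph{even part} of $h_k$ rather than of $h_k$ itself ($h_k$ need not be even --- take $\phi$ depending on $y_k$ alone and non-even); this even part is all that the subsequent theorem uses, since both $\varepsilon_k$ and $\widetilde{\varepsilon}_k$ are symmetric, so the paper's final result is unaffected, but the lemma as stated is not literally proved. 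You instead condition on $\varepsilon_0,\dots,\varepsilon_{k-1}$ (constants), keep the $k$-th coordinate as the genuinely affine $\sigma_k x$, and symmetrize only the post-$k$ signs; the price is that Lemma \ref{EBL} no longer applies as stated, and you must prove the generalized version with constant and affine coordinates --- exactly the obstacle you flag. Your verification of that generalization goes through: the constants and the affine coordinate drop out of the first-order correction term because their second derivatives vanish, the quadratic form term is nonnegative by convexity of $\phi$, and $\partial_{y_{k+j}}\phi$ is increasing in each $p_\ell$ --- off the diagonal by supermodularity, on the diagonal ($\ell=j$) by convexity of $\phi$, whose Hessian has nonnegative diagonal entries --- so the covariance inequality for the independent Rademacher signs closes the argument as in Lemma \ref{EBL}. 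This mirrors how the univariate Lemma \ref{BL} explicitly carries the affine part $a+bu$, a feature Lemma \ref{EBL} lacks. In short: the paper's route reuses Lemma \ref{EBL} as a black box at the cost of proving only the symmetrized (even-part) statement, while your route costs a re-derivation of the EBL computation but proves convexity of $h_k$ itself on all of $\mathbb{R}$, with no sign manipulation of the pre-$k$ coordinates and no positivity assumption on them; the same smoothness caveat the paper notes for Lemmas \ref{BL} and \ref{EBL} covers the kink of $\left\vert x\right\vert$ in your $\tilde g_j$.
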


\begin{proof}
From the symmetry of the innovations we can write
\begin{equation*}
h_{k}(x)=E_{\varepsilon _{0}...\varepsilon _{k-1}\varepsilon
_{k+1...}\varepsilon _{n}}\left[ \phi (\sigma _{0}\varepsilon
_{0},g_{1}\varepsilon _{1},...,g_{k}x,,...,g_{n}\varepsilon _{n})\right] =
\end{equation*}%
\begin{equation*}
=E_{\varepsilon _{0}...\varepsilon _{k-1}\varepsilon _{k+1...}\varepsilon
_{n}}[1_{\left\{ \underline{\varepsilon }\geq \underline{0}\right\}
}\sum\limits_{\underline{p}\in P_{n}}\phi (\sigma _{0}p_{0}\varepsilon
_{0},g_{1}p_{1}\varepsilon _{1},...,g_{k}p_{k}x,...,g_{n}p_{n}\varepsilon
_{n})]
\end{equation*}%
Since each $g_{i}$ is convex in $\varepsilon _{k}$, from Lemma \ref{EBL} it
follows that for each $\sigma _{0}>0$ and $\varepsilon _{i}\geq 0$, $i\neq k$%
, the function
\begin{equation*}
\overline{h}_{k}(x)=\sum\limits_{\underline{p}\in P_{n}}\phi (\sigma
_{0}p_{0}\varepsilon _{0},g_{1}p_{1}\varepsilon
_{1},...,g_{k}p_{k}x,...,g_{n}p_{n}\varepsilon _{n})
\end{equation*}%
is convex. Averaging with respect to $\varepsilon _{i}$, with $i\neq k$, it
follows that also $h_{k}(x)$ is convex.
\end{proof}

This enables us to state our main multivariate comparison result:

\begin{theorem}
Let $X_{n}$ and $S_{n}$ be as in (\ref{garch1}) and (\ref{sums}). Let the $%
\varepsilon _{i}$ be symmetric. If also $\widetilde{\varepsilon }_{k}$ is
symmetric and $\widetilde{\varepsilon }_{k}$ $\geq _{cx}\varepsilon _{k}$,
then,
\begin{equation*}
E\left[ \phi (X_{0},..., X_{k},...,X_{n})\right] \leq E\left[ \phi (X_{0},...,\tilde{X}%
_{k},...,\tilde{X}_{n})\right]
\end{equation*}
for every function $\phi :R^{n+1}\rightarrow R$ supermodular and convex.
\end{theorem}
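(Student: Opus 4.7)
The plan is to mirror the strategy used in Theorem \ref{compth} for the univariate sum $S_n$, replacing the role played there by the convex function $\phi(s_n(\cdot))$ with a generic supermodular convex $\phi:\mathbb{R}^{n+1}\to\mathbb{R}$ acting on the whole trajectory $(X_0,\ldots,X_n)$. The engine that does the real work has already been built in the preceding lemma, which tells us that the conditional expectation
\[
h_k(x):=E\bigl[\phi(X_0,\ldots,X_n)\,\big|\,\varepsilon_k=x\bigr]
\]
is a convex function of $x$ whenever $\phi$ is supermodular and convex and the other $\varepsilon_i$ are symmetric. Given this, the theorem should fall out cleanly from the definition of the convex order.

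The steps I would carry out, in order, are the following. First, use the independence of the innovations to condition on $\widetilde{\varepsilon}_k$ and write
\[
E\bigl[\phi(X_0,\ldots,\tilde X_k,\ldots,\tilde X_n)\bigr]
= E_{\widetilde{\varepsilon}_k}\bigl[h_k(\widetilde{\varepsilon}_k)\bigr],
\qquad
E\bigl[\phi(X_0,\ldots,X_n)\bigr] = E_{\varepsilon_k}\bigl[h_k(\varepsilon_k)\bigr],
\]
using the fact that changing only the $k$-th innovation leaves $X_0,\ldots,X_{k-1}$ unaffected and modifies $X_k,\ldots,X_n$ through the same recursion, so that $h_k$ is literally the same function in both expectations. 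Second, invoke the previous lemma to conclude that $h_k$ is convex. Third, since $\widetilde{\varepsilon}_k\geq_{cx}\varepsilon_k$ and $h_k$ is convex, the definition of the convex order yields $E[h_k(\widetilde{\varepsilon}_k)]\geq E[h_k(\varepsilon_k)]$, which is precisely the inequality claimed.

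The only delicate point is the justification that $h_k$ is well defined as a function of $\varepsilon_k$ alone (i.e.\ that the outer expectation really only sees $\widetilde{\varepsilon}_k$ through $h_k$); this is a direct consequence of the independence and identical distribution of $\{\varepsilon_i\}_{i\neq k}$, which ensures that integrating out those coordinates leaves a deterministic function of $x$. All the substantive combinatorial work, namely the sign-flipping argument that restores convexity from the non-componentwise-convex structure of the recursion, has been absorbed into the preceding lemma (and ultimately into Lemma \ref{EBL}), so here it is only a matter of assembling the pieces. The main obstacle I would anticipate, were the preceding lemma not already available, would be precisely that combinatorial step; given it, the present theorem is essentially a transcription of Theorem \ref{compth} into the multivariate setting.
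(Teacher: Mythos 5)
Your proposal is correct and is essentially identical to the paper's own proof: both condition on the $k$-th innovation, invoke the preceding lemma for the convexity of $h_k(x)=E[\phi(X_0,\ldots,X_n)\mid \varepsilon_k=x]$, and then apply the definition of the convex order to $\widetilde{\varepsilon}_k \geq_{cx} \varepsilon_k$ to obtain $E_{\varepsilon_k}[h_k(\varepsilon_k)]\leq E_{\widetilde{\varepsilon}_k}[h_k(\widetilde{\varepsilon}_k)]$. The paper states exactly this one-line assembly (with independence of the innovations justifying the conditioning, as you note), so no further work is needed.
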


\begin{proof}
From the previous lemma we have that
\begin{equation*}
E\left[ \phi (X_{0},..., X_{k},...,X_{n})\right] =E_{\varepsilon _{k}}[h_{k}(\varepsilon
_{k})]\leq E_{\widetilde{\varepsilon }_{k}}[h_{k}(\widetilde{\varepsilon }%
_{k})]=E\left[ \phi (X_{0},...\tilde{X}_{k},...,\tilde{X}_{n})\right] \text{.}
\end{equation*}
\end{proof}

We remark that we are not able to prove supermodularity or componentwise
convex ordering of $(X_{0},...,X_{n})$; at the moment both hypothesys on $%
\phi $ seem to be necessary for Lemma \ref{EBL}.

\section{The Garch (1,1) case}

We focus now on the Garch (1,1) model specified by

\begin{equation}
\left\{
\begin{array}{c}
X_{n}=\sigma _{n}\varepsilon _{n} \\
\sigma _{n+1}^{2}=\alpha _{0}+\alpha _{1}X_{n}^{2}+\beta _{1}\sigma _{n}^{2}%
\end{array}%
\right. \text{, }  \label{garch}
\end{equation}%
\newline
with $\alpha _{0},\alpha _{1},\beta _{1}>0$ and $\alpha _{1}+\beta _{1}<1$.
For this model the recursive dynamic of the volatility or of the variance (%
\ref{gienne2}) can easily be explicitated as follows (see \cite{nelson}):

\begin{equation}
\sigma _{n+1}^{2}=\sigma _{0}^{2}\prod\limits_{i=1}^{n+1}(\beta _{1}+\alpha
_{1}\varepsilon _{n-i+1}^{2})+\alpha _{0}\left[ 1+\sum\limits_{k=1}^{n}\prod%
\limits_{i=1}^{k}(\beta _{1}+\alpha _{1}\varepsilon _{n-i+1}^{2})\right]
\label{varnelson}
\end{equation}%
\newline
From this expression it is immediate that $\sigma _{n+1}^{2}$ and $\sigma
_{n+1}$ are nondecreasing functions of the parameters $\alpha _{0}$, $\alpha
_{1}$and $\beta _{1}$.We already remarked that this model is a special case
of both M1 and M2, so all the comparison result for varying innovations of
the preceding sections do hold. In this section we are interested in
establishing comparison results for different parameters $\alpha _{0}$, $%
\alpha _{1}$, $\beta _{1}$. As mentioned in the Introduction, it is natural
that an increase in $\alpha _{0}$, $\alpha _{1}$, $\beta _{1}$ should
correspond to an increase in the variability of $X_{n}$ and $S_{n}$; in this
section we prove it rigorously. Without any additional effort, we can
consider stochastic parameters $\alpha _{0}$, $\alpha _{1}$, $\beta _{1}$:

\begin{proposition}
Let $X_{n}$ be as in (\ref{garch}). If we consider random parameters $\alpha
_{0}\leq _{st}\widetilde{\alpha }_{0}$, $\alpha _{1}\leq _{st}\widetilde{%
\alpha }_{1}$ and $\beta _{1}\leq _{st}\widetilde{\beta }_{1}$, then $%
\left\vert X_{n}\right\vert \leq _{st}\left\vert \widetilde{X}%
_{n}\right\vert $, $X_{n}^{2}\leq _{st}\widetilde{X}_{n}^{2}$ and $%
X_{n}\preceq _{cx}\widetilde{X}_{n}$.
\end{proposition}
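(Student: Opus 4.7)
The central tool is the explicit formula (\ref{varnelson}). Read as a deterministic function, it presents $\sigma_{n+1}^2$ as a sum of products of terms of the form $(\beta_1 + \alpha_1 \varepsilon_{n-i+1}^2)$ together with $\alpha_0$ as a positive linear factor, all with nonnegative coefficients. So, for any fixed realization of $\sigma_0, \varepsilon_0, \ldots, \varepsilon_n$, the map
\[
(\alpha_0,\alpha_1,\beta_1)\longmapsto \sigma_{n+1}^2
\]
is componentwise nondecreasing, and the same holds for $\sigma_{n+1}$ and for $X_n^2=\sigma_n^2 \varepsilon_n^2$ and $|X_n|=\sigma_n|\varepsilon_n|$.

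The plan is then to reduce everything to Theorem \ref{compX} and Lemma \ref{stconv} by a coupling argument. Since $\alpha_0\leq_{st}\widetilde\alpha_0$, $\alpha_1\leq_{st}\widetilde\alpha_1$, $\beta_1\leq_{st}\widetilde\beta_1$, we can, by the standard characterization of $\leq_{st}$ via almost-sure coupling, construct random variables on a common probability space with the correct marginals and with $\alpha_0\leq\widetilde\alpha_0$, $\alpha_1\leq\widetilde\alpha_1$, $\beta_1\leq\widetilde\beta_1$ a.s., taken jointly independent of $\sigma_0$ and of the innovation sequence. Running the two Garch(1,1) recursions with the \emph{same} $\sigma_0$ and the \emph{same} innovations but with the two coupled sets of parameters, the monotonicity observation above yields $\sigma_n\leq\widetilde\sigma_n$ and $\sigma_n^2\leq\widetilde\sigma_n^2$ pointwise, hence in particular $\sigma_n\leq_{st}\widetilde\sigma_n$ and $\sigma_n^2\leq_{st}\widetilde\sigma_n^2$.

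From here the three conclusions follow exactly as in the proofs of Theorems \ref{compX} and \ref{propconv}. Writing $|X_n|=\sigma_n|\varepsilon_n|$ and $X_n^2=\sigma_n^2\varepsilon_n^2$ with $\varepsilon_n$ independent of $\sigma_n,\widetilde\sigma_n$ (and of the parameters), Theorem 1.A.3(b) of \cite{ss} (the multiplicative closure of $\leq_{st}$ under independent nonnegative factors) gives
\[
|X_n|\leq_{st}|\widetilde X_n|,\qquad X_n^2\leq_{st}\widetilde X_n^2.
\]
For the convex order, we apply Lemma \ref{stconv}: with $\sigma_n\leq_{st}\widetilde\sigma_n$ nonnegative and $\varepsilon_n$ independent of both and centered, we obtain directly
\[
X_n=\sigma_n\varepsilon_n\leq_{cx}\widetilde\sigma_n\varepsilon_n=\widetilde X_n.
\]

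The only thing requiring a little care is the coupling step: one must ensure that after coupling the parameters so as to have the a.s.\ inequalities, the innovations remain jointly independent of the coupled parameter vector, so that the independence hypothesis of Theorem 1.A.3(b) and Lemma \ref{stconv} are still verified in the coupled model. This is routine since the distribution of the innovation sequence is unchanged by the coupling, but it is the one point where some verbal justification is warranted; no further obstacle is expected.
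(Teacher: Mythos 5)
Your proposal is correct and follows essentially the same route as the paper: the paper likewise reads off from (\ref{varnelson}) that $\sigma_n$ and $\sigma_n^2$ are nondecreasing in $(\alpha_0,\alpha_1,\beta_1)$, deduces $\sigma_n\leq_{st}\widetilde\sigma_n$ and $\sigma_n^2\leq_{st}\widetilde\sigma_n^2$, and then concludes via the argument of Theorem \ref{compX} (Theorem 1.A.3(b) of the reference) for the $\leq_{st}$ claims and via Lemma \ref{stconv} for $X_n\preceq_{cx}\widetilde X_n$. Your only addition is to spell out the almost-sure coupling of the parameters that the paper leaves implicit, which is a harmless (indeed clarifying) elaboration of the same argument.
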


\begin{proof}
Since $\sigma _{n}$ and $\sigma _{n}^{2}$are increasing functions of the
parameters, if $\alpha _{0}\leq _{st}\widetilde{\alpha }_{0}$, $\alpha
_{1}\leq _{st}\widetilde{\alpha }_{1}$ and $\beta _{1}\leq _{st}\widetilde{%
\beta }_{1}$ it follows that $\sigma _{n}\leq _{st}\widetilde{\sigma }_{n}$
and $\sigma _{n}^{2}\leq _{st}\widetilde{\sigma }_{n}^{2}$. As in the proof
of Theorem \ref{compX} it follows that $\left\vert X_{n}\right\vert \leq
_{st}\left\vert \widetilde{X}_{n}\right\vert $and $X_{n}^{2}\leq _{st}%
\widetilde{X}_{n}^{2}$. From Lemma \ref{stconv} $\sigma _{n}\leq _{st}%
\widetilde{\sigma }_{n}$ implies that $X_{n}\preceq _{cx}\widetilde{X}_{n}$.
\end{proof}

The last point is to prove the convex comparison of the sums $S_{n}$; again,
this is nontrivial since the $X_{n}$ are not independent; we provide a proof
in the case of symmetric innovations.

\begin{theorem}
Let $X_{n}$ be as in (\ref{garch}) and\ $S_{n}$ as in (\ref{sums}). Let $%
\varepsilon _{i}$ be symmetric. If we consider random parameters $\alpha
_{0}\leq _{st}\widetilde{\alpha }_{0}$, $\alpha _{1}\leq _{st}\widetilde{%
\alpha }_{1}$ and $\beta _{1}\leq _{st}\widetilde{\beta }_{1}$, then $%
S_{n}\preceq _{cx}\tilde{S}_{n}$.
\end{theorem}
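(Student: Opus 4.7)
The plan is to couple the two Garch(1,1) processes on a common probability space by using the same initial variance $\sigma_0$ and the same symmetric innovations $\varepsilon_0, \varepsilon_1, \ldots$, and to exploit the monotonicity of the volatility recursion in the parameters. From the Nelson representation (\ref{varnelson}), for any fixed innovations $\sigma_k^2$ (and hence $\sigma_k$) is a nondecreasing function of each of $\alpha_0,\alpha_1,\beta_1$. Using the coupling implicit in the $\leq_{st}$ orderings (handled as in the preceding Proposition), I would realise the two parameter triples and the innovations jointly so that $\sigma_k \leq \tilde\sigma_k$ almost surely for every $k\leq n$, with identical innovations driving both processes. It then suffices to verify that $E[\phi(S_n)] \leq E[\phi(\tilde S_n)]$ for every convex $\phi$.

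The next step is to condition on $\sigma_0$ and on the absolute values $|\varepsilon_0|,\ldots,|\varepsilon_n|$. By symmetry of the innovations, the sign vector $\underline{p}:=(\operatorname{sgn}(\varepsilon_0),\ldots,\operatorname{sgn}(\varepsilon_n))$ is uniformly distributed on $P_{n+1}=\{-1,+1\}^{n+1}$ and independent of the conditioning. Because $\sigma_k$ is a function only of $\sigma_0$, $|\varepsilon_0|,\ldots,|\varepsilon_{k-1}|$ and the parameters, the $\sigma_k$ become deterministic after conditioning, and the conditional expectation reduces to a sign-average
\[
E[\phi(S_n)\mid\sigma_0,|\varepsilon_0|,\ldots,|\varepsilon_n|]=\frac{1}{2^{n+1}}\,H(\sigma_0,\ldots,\sigma_n),\qquad H(t_0,\ldots,t_n):=\sum_{\underline{p}\in P_{n+1}}\phi\!\left(\sum_{k=0}^n p_k\,t_k\,|\varepsilon_k|\right).
\]

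The central step is then to show that $H$ is coordinatewise nondecreasing on $\mathbb{R}_+^{n+1}$. Pairing each $\underline{p}$ with its sign-flip at coordinate $k$, this reduces to checking that for every $A\in\mathbb{R}$ and $c\geq 0$ the map $t\mapsto \phi(A+tc)+\phi(A-tc)$ is nondecreasing on $t\geq 0$, which is immediate from convexity of $\phi$ (the one-sided derivative of $\phi$ is nondecreasing). Granting this monotonicity, $\sigma_k\leq\tilde\sigma_k$ a.s.\ yields $H(\sigma_0,\ldots,\sigma_n)\leq H(\tilde\sigma_0,\ldots,\tilde\sigma_n)$ a.s., and taking expectations gives $E[\phi(S_n)]\leq E[\phi(\tilde S_n)]$, i.e.\ $S_n\preceq_{cx}\tilde S_n$.

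The main obstacle I expect is precisely the monotonicity of $H$: the raw sum $S_n=\sum_k \sigma_k\varepsilon_k$ is \emph{not} a monotone function of the individual $\sigma_k$, because the random signs of the innovations can point either way; monotonicity in the $\sigma_k$ only emerges after averaging over sign patterns. This is where the symmetry hypothesis on the innovations is essential, in exactly the same spirit as Lemmas \ref{BL} and \ref{convexity} are used in the proof of Theorem \ref{compth}. A secondary technical point is the joint coupling of three possibly dependent random parameters from their marginal $\leq_{st}$ comparisons; I would handle this as in the preceding Proposition, noting that what is really needed is that $F(\alpha_0,\alpha_1,\beta_1):=E[\phi(S_n)\mid\alpha_0,\alpha_1,\beta_1]$ is coordinatewise nondecreasing, after which the parameter comparison can be applied one coordinate at a time.
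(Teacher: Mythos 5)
Your proof is correct, but it is organized differently from the paper's own. Both arguments rest on the same engine --- symmetry of the innovations lets one average over the $2^{n+1}$ sign patterns, and monotonicity only emerges after this averaging --- but you apply it at a different place. The paper never couples the two processes: it regards $E[\phi(S_n)]$ directly as a function of the parameters, writes the sign-averaged sum $\overline{h}$, differentiates it with respect to $\alpha_0,\alpha_1,\beta_1$, and shows the derivatives are nonnegative via the covariance inequality for independent uniform signs, exactly as in Lemma \ref{BL}; the $\leq_{st}$ hypothesis on the random parameters is invoked only at the very end. You instead (i) couple parameters and innovations so that, by monotonicity of the Nelson representation (\ref{varnelson}) in the parameters, $\sigma_k \leq \tilde{\sigma}_k$ a.s.\ with common innovations, and (ii) prove that the sign-averaged conditional expectation $H$ is coordinatewise nondecreasing in the volatilities $(\sigma_0,\dots,\sigma_n)$ themselves, via the flip-pairing reduction to the fact that $u\mapsto \phi(A+u)+\phi(A-u)$ is even and convex, hence nondecreasing on $u\geq 0$. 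Your route buys elementarity and modularity: the pairing argument needs no smoothness of $\phi$ and no covariance inequality, and monotonicity of $H$ in the $\sigma_k$'s is a statement that would transfer to any volatility recursion monotone in its parameters (indeed to model M1 in general), whereas the paper's computation is tied to differentiating the $g_i$'s with respect to the specific Garch parameters; the paper's route, in exchange, avoids any explicit coupling construction. Two small points of care, neither fatal: your conditioning should include the (coupled) parameter triples along with $\sigma_0$ and the $|\varepsilon_k|$'s, with the sign vector independent of all of these --- immediate from symmetry and the standing independence assumptions; and the passage from the three marginal $\leq_{st}$ comparisons to a simultaneous a.s.\ coupling of the triples requires independence of the parameters within each triple (or a multivariate $\leq_{st}$ comparison), a gap you correctly flag and which the paper's own final step shares.
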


\begin{proof}
As before, we write%
\begin{equation*}
S_{n}=\sigma _{0}\varepsilon _{0}+g_{1}(\varepsilon _{0},...,\alpha
_{0},\alpha _{1},\beta _{1})\varepsilon _{1}+...+g_{n}(\varepsilon
_{0},...,\alpha _{0},\alpha _{1},\beta _{1})\varepsilon _{n}
\end{equation*}%
where the functions $g_{i}$ are nondecreasing in the parameters $\alpha
_{0},\alpha _{1},\beta _{1}$. Let $\phi $ be any convex function. We first
want to prove that $E\left[ \phi \left( S_{n}\right) \right] $ is
nondecreasing in the parameters $\alpha _{0},\alpha _{1},\beta _{1}$. From
the symmetry of the innovations $\varepsilon _{i}$ we can write:%
\begin{equation*}
E\left[ \phi \left( S_{n}\right) \right] =E_{\varepsilon
_{0},...,\varepsilon _{n}}[\phi (\sigma _{0}\varepsilon
_{0}+...+g_{n}\varepsilon _{n})]
\end{equation*}%
\begin{equation*}
=E_{\varepsilon _{0},...,\varepsilon _{n}}[\sum\limits_{\underline{p}\in
P_{n+1}}\phi (\sigma _{0}p_{0}\varepsilon _{0}+...+p_{n}g_{n}\varepsilon
_{n})1_{\left\{ \varepsilon _{0}\geq 0,...,\varepsilon _{n}\geq 0\right\} }]%
\text{.}
\end{equation*}%
Denoting by
\begin{equation*}
\overline{h}(\sigma _{0},...,\alpha _{0},\alpha _{1},\beta
_{1})=\sum\limits_{\underline{p}\in P_{n+1}}\phi (\sigma
_{0}p_{0}\varepsilon _{0}+...+p_{n}g_{n}\varepsilon _{n})\text{,}
\end{equation*}%
\newline
we see that $\overline{h}$ is nondecreasing in $\alpha _{0},\alpha
_{1},\beta _{1}$; indeed we can compute:
\begin{equation*}
\frac{\partial \overline{h}}{\partial \alpha _{0}}=\sum\limits_{\underline{p}%
\in P_{n+1}}\phi ^{\prime }\left(\sigma
_{0}p_{0}\varepsilon_{0}+...+p_{n}g_{n}\varepsilon _{n}\right) \cdot (p_{1}\varepsilon _{1}g_{1}^{\prime }+...+p_{n}\varepsilon _{n}g_{n}^{\prime })\geq 0
\end{equation*}%
\newline
from the multivariate covariance inequality, as in the proof of Lemma \ref%
{BL}. The same reasoning shows that $\frac{\partial \overline{h}}{\partial
\alpha _{1}}\geq 0$ and $\frac{\partial \overline{h}}{\partial \beta _{1}}%
\geq 0$.\newline
It follows that $E\left[ \phi \left( S_{n}\right) \right] $ is is
nondecreasing in $\alpha _{0},\alpha _{1},\beta _{1}$; but then if $\alpha
_{0}\leq _{st}\widetilde{\alpha }_{0}$, $\alpha _{1}\leq _{st}\widetilde{%
\alpha }_{1}$ and $\beta _{1}\leq _{st}\widetilde{\beta }_{1}$,
\begin{equation*}
E\left[ \phi \left( S_{n}(\alpha _{0},\alpha _{1},\beta _{1}\right) \right]
\leq E[\phi (S_{n}(\tilde{\alpha}_{0},\tilde{\alpha}_{1},\tilde{\beta}_{1})]
\end{equation*}%
that is $S_{n}\preceq _{cx}\tilde{S}_{n}$.
\end{proof}


\begin{thebibliography}{3}
\bibitem{Bellamy} Bellamy, N., Jeanblanc, M. (2000). Incompleteness of
markets driven by a mixed diffusion. \emph{Fin. Stochastics}, {\bf 4}, 209--222.

\bibitem{BergRusch} Bergenthum, J., R\"{u}schendorf, L. (2006). Comparison
of option prices in semimartingale models. \emph{Fin. Stochastics}, {\bf 10}, 222--249.

\bibitem{birnbaum} Birnbaum, Z. W. (1948). On random variables with
comparable peakedness. \emph{Ann. Math. Stat.}, {\bf 19}, 76-81.

\bibitem{Bollerslev} Bollerslev, T. (1986). General Autoregressive
Conditional Heteroskedasticity. \emph{Journal of Econometrics}, {\bf 31}, 307--327.

\bibitem{El Karoui} El Karoui, N., Jeanblanc-Picqu\'{e}, M., Shreve, S. E.
(1998). Robustness of the Black and Scholes formula. \emph{Math. Fin.}, {\bf 8}, 93--126.

\bibitem{Engle} Engle, R. (1982). Autoregressive Conditional
Heteroskedasticity with Estimates of the Variance of the UK Inflation.
\emph {Econometrica}, {\bf 50}, 987--1008.

\bibitem{finucan} Finucan, H. M. (1964). A note on kurtosis. \emph{Journal of the
Royal Statistical Society Series B (Methodological)}, {\bf 26}, 1, 111--112.

\bibitem{Gushchin} Gushchin, A. A., \ Mordecki, E. (2002). Bounds on option
prices for semi martingale market models. \emph{Proc. Steklov Inst. Math.}, {\bf 237},
73--113.

\bibitem{Meester} Meester, L. E., Shanthikumar, J. G. (1999).
Stochastic convexity on general space.
\emph{Math. Oper. Res.}, {\bf 24}, 472--494.

\bibitem{ms} Muller, A., Stoyan, D. (2002). \emph{Comparison Methods for
Stochastic Models and Risks}, J. Wiley \& Sons.

\bibitem{Moeller} M\o ller, T. (2004). Stochastic orders in dynamic
insurance markets. \emph{Fin. Stochastics}, {\bf 8}, 479--499.

\bibitem{nelson} Nelson, D.B. (1990). Stationarity and persistence in
Garch(1,1) models. \emph{Econometric Theory}, {\bf 6}, 318--334.

\bibitem{ss} Shaked, M., Shantikumar, G. J. (2006). \emph{Stochastic Orders},
2nd edn., Springer.
\end{thebibliography}
\end{document}